\documentclass[12pt]{amsart}
\pagestyle{plain}
\usepackage{fullpage, bbm}
\usepackage{graphicx}

\usepackage{amsfonts}
\usepackage{amssymb}
\usepackage{amsmath}
\usepackage{amsthm}

\newtheorem{theorem}{Theorem}[section]

\newtheorem{corollary}[theorem]{Corollary}
\newtheorem{proposition}[theorem]{Proposition}
\newtheorem{definition}[theorem]{Definition}

\theoremstyle{remark}\newtheorem{remark}[theorem]{Remark}

\newcommand{\EE}{\mathbb{E} }
\newcommand{\one}{\mathbbm{1} }
\newcommand{\QQ}{\mathbb{Q} }
\newcommand{\PP}{\mathbb{P} }
\newcommand{\RR}{\mathbb{R} }

 %{\mathcal{S}_{\mathrm{lc}} }
\newcommand{\ff}{\mathcal{F} }

\begin{document}

\author{Michael R. Tehranchi \\
University of Cambridge }
\address{Statistical Laboratory\\
Centre for Mathematical Sciences\\
Wilberforce Road\\
Cambridge CB3 0WB\\
UK}
\email{m.tehranchi@statslab.cam.ac.uk}

\title{Arbitrage theory without a num\'eraire}

\date{\today}
\thanks{\noindent\textit{Keywords and phrases:} martingale deflator, num\'eraire, price bubble, arbitrage}
\thanks{\textit{Mathematics Subject Classification 2010:}   60G42, 91B25} 

\begin{abstract} This note develops an arbitrage theory
for a discrete-time market model without the assumption of
the existence of a num\'eraire asset.  Fundamental theorems of asset pricing  are stated
and proven in this context. The distinction between the notions of investment-consumption arbitrage
and pure-investment arbitrage provide a discrete-time analogue of the distinction
between the notions of absolute arbitrage and relative arbitrage in the 
continuous-time theory.  Applications to the modelling of bubbles is discussed.
\end{abstract}

\maketitle
 
\section{Introduction}

In most accounts of arbitrage
 theory, the concept of an equivalent martingale measure takes centre stage.  
Indeed, the discrete-time fundamental theorem of asset pricing,
first proven by Harrison \& Kreps \cite{HK} for models with finite sample spaces
and by Dalang, Morton \& Willinger \cite{DMW} for general models, 
says that there is no arbitrage 
 if and only if there exists an equivalent martingale measure.

An equivalent martingale measure is defined in terms of a given num\'eraire asset.
Recall a num\'eraire is an asset, or more generally a portfolio, whose
price is strictly positive at all times with probability one.  Associated
to a given equivalent martingale measure 
is a positive adapted process.  This process is often called a martingale deflator, but
is also known
as a pricing kernel, a stochastic discount factor or a state price density.  It seems
that martingale deflators feature less prominently in
the financial mathematics literature, although they are, in a sense, 
more fundamental.   Indeed, they are the natural dual variables
for an investor's optimal investment problem and have the economic
interpretation as the sensitivity of the maximised expected utility with
respect to the current level of wealth.  Furthermore, 
 unlike the concept of an equivalent martingale measure,
 the concept of a martingale deflator is defined in a completely
 num\'eraire-independent manner.

Note that in order to define an equivalent martingale measure, it is 
necessary to assume that at least one num\'eraire exists.  This assumption
is ubiquitous in the financial mathematics literature, but as we will see,
it is not strictly necessary.  In particular, in this note, we consider a discrete-time
arbitrage theory without the assumption of the existence
of a num\'eraire, and we will see that  fundamental theorems of asset pricing can be formulated in 
this setting.  From an aesthetic, or possibly pedantic, perspective, 
 we dispense with a mathematically
unnecessary assumption and rephrase the characterisation of an
arbitrage free market in terms of the more fundamental notion of a martingale deflator. 
  However, there are other reasons to weaken the assumptions of the theorem.

While it does not seem that the assumption of the existence of a num\'eraire 
is controversial, 
it is not entirely innocent either.  Indeed, 
there is growing interest in robust arbitrage theory, where the assumption
that there exists a single dominating measure is dropped.  See, for 
instance,  recent papers of  Bouchard \& Nutz \cite{BN} and  Burzoni,   Frittelli \&   Maggis
\cite{BFM} for robust versions of the 
fundamental theorem of asset pricing in discrete time.  From the perspective of
robust finance, the assumption of the existence of a num\'eraire seems rather strong.
Indeed, when dealing with a family of possibly singular measures, insisting that 
there is an asset with strictly positive price almost surely under all such measures 
might be asking too much.

One benefit of our more general treatment of 
arbitrage theory is that it provides some analogues in the discrete-time theory
that previously have been considered generally to be only continuous-time phenomena.
In particular, we will see that the distinction between the notions of investment-consumption arbitrage
and pure-investment arbitrage provide a discrete-time analogue of the distinction
between the notions of absolute arbitrage and relative arbitrage in the 
continuous-time theory.  
In particular, when the market does not admit a num\'eraire, it
is possible for there to exist a price bubble in discrete-time in the same spirit as
the continuous-time notion of bubble popularised by Cox \& Hobson \cite{CH}
and Protter \cite{protter}.

%%%%%%%%%%%%%%%%% 
The remainder of the paper is arranged as follows.  In section \ref{se:invcon}
we introduce the notation and basic definitions as well as the 
main results of this paper: a characterisation of no-arbitrage
in a discrete-time model without a num\'eraire.  We also characterise
the minimal superreplication cost of a contingent claim in this context,
and show that the martingale deflator serve as the dual variables for 
optimal investment problems, even when no num\'eraire is assumed to exist.
In section \ref{se:pure} we introduce the notion of a pure-investment strategy,
and  characterise contingent claims which can be replicated by such strategies.
We recall the notion of a complete market, and show among other things, that
if the market is complete and arbitrage free, then  necessarily there exists
a risk-free num\'eraire.    
In addition, we recapture the classical no-arbitrage results when we assume that a num\'eraire
exists.     In section \ref{se:abs-rel} we explore other notions of arbitrage and
show that they are not equivalent in general. We also
discuss how these differences could be used to define a bubble
in a discrete-time model, in analogy of a popular definition of a bubble
in continuous-time models as recalled in section  \ref{se:relative}.
In section \ref{se:proofs} we present the proofs of the main results, 
along with the key economic insight arising from the optimal investment problem.
The ideas here originated in Rogers's \cite{rogers} proof of the Dalang--Morton--Willinger
theorem.  In particular,  we present
the full details since they are rather 
easy and probabilistic, and do not rely on any knowledge of convex analysis
or separation theorems in function spaces.  Furthermore, we use this
utility maximisation-based framework to give a novel proof of the
 characterisation of super-replicable
claims.  
Finally, in section \ref{se:tech} we include a few technical lemmas 
regarding measurability and discrete-time local martingales.  We note here
that we do not rely on any general measurable selection theorems, preferring
a hands-on treatment.
%%%%%%%%%%%%%%%%%%%%%%%%%%%%%%%%%

\section{Investment-consumption strategies}\label{se:invcon} %%%%%%%%%%%%%%%%%%%%%%
We consider a general frictionless market model where there are $n$ assets.
We let $P^i_t$  denote the price of asset $i$ at time $t$,
where we make the simplifying assumption that no asset pays a dividend.
We use the notation $P_t = (P_t^1, \ldots, P_t^n)$ to denote the vector of asset
prices, and we model these prices as a $n$-dimensional adapted stochastic process
$P=(P_t)_{t \ge 0}$ defined on some probability space $(\Omega, \ff_{\infty}, \PP)$
with filtration $\ff = (\ff_t)_{t \ge 0}$.  Time is discrete
so the notation $t \ge 0$ means $t \in \{0, 1, 2, \ldots \}$.  We will also
use the notation
$a \cdot b = \sum_{i=1}^n a^{i}b^{i}$ to denote the usual Euclidean inner  product in $\RR^{n}$.

\begin{remark}
Note that we do not make any assumption about the sign of any of the
random variables $P_t^i$. When $P_t^i = 0$ the asset is worthless  and 
when $P^i_t < 0$ the asset is actually a liability.  This flexibility
allows us to handle claims, such as forward contracts, whose payouts can
be positive, zero or negative.  

However, in most presentations of discrete-time arbitrage theory, one assumes that
there is at least one asset with strictly positive price, 
that is, that there is at least one $i$
such that $P_t^i > 0$ almost surely for all $t \ge 0$. 
Such an asset is called a num\'eraire because the prices of the other assets
can be written as multiples of the price of the num\'eraire.
Usually, the assumption of the existence of a num\'eraire
	is treated as natural because of the following examples.

In the first example, we make the trite observation that prices, by definition, must be
 denominated in some currency.  Therefore, one  may
choose the num\'eraire asset   to be the currency
itself, in which case $P^{\mathrm{num}}_t = 1$   for all $t$. 

In the second example, one   supposes that there is a  central 
bank which issues bonds at each time $t$ maturing
at time $t+1$.  Furthermore, the central
bank is assumed to be perfectly risk-free and the interest rate
 is assumed to be a constant $r > 0$.   The value at time $t$
of a money market account from the initial investment of one
unit of currency is then $(1+r)^t$.  In such a market set-up, one may
assign  the num\'eraire asset the price $P^{\mathrm{num}}_t  = (1+r)^t$ for all $t$.   

Note that in second example, we usually \textit{do not} include the currency
itself as an asset in our market model.  Indeed, otherwise, there would be the trivial
arbitrage of holding a short position in the currency and a long position
in the money market.  Therefore, in such models, we think of the currency as a
unit of account and means of exchange, but the money market plays the 
role of a store of value.  

In the models treated in this paper, we simply remove the assumption
that there is some store of value.  This allows us to model, for instance, an economy 
experiencing hyperinflation  where every asset (and even the currency itself) 
suffer a non-zero probability of becoming worthless in the future. 
Because there is no store of value, we think of prices as being
denominated in some perishable consumption good.  Of course, 
we must then allow agents the possibility
to consume this good.  A related continuous-time market
model with  hyperinflation is considered in the paper of Carr, Fisher \& Ruf \cite{CFR}.
\end{remark}

To the market described by the  process $P$, we now
introduce an investor.
Suppose that $H^i_t$ is the number of shares
of asset $i$ held during the interval $(t-1, t]$ of time. 
We will allow $H^i_t$ to be either positive, negative, or zero with the 
	interpretation that if $H_t^i > 0$ the investor is long asset $i$ 
	and if $H_t^i < 0$ the investor is short the asset. Also, we do not demand
	that the $H^i_t$ are integers.  As usual, we  introduce a self-financing
	constraint on the possible dynamics of the  $n$-dimensional process
$H = (H^1_t, \ldots, H^n_t)_{t \ge 1}$.

\begin{definition}
An investment-consumption strategy 
is an $n$-dimensional  predictable process $H$
satisfying the self-financing condition 
\begin{align*}
%x & \ge H_1 \cdot P_0, \\
H_{t} \cdot P_{t} &\ge  H_{t+1} \cdot P_{t} \mbox{ almost surely for all } t \ge 1.
\end{align*}
%Associated to a strategy $(x,H)$ is the consumption process $c^{(x,H)}$ defined by
%\begin{align*}
%c_t^{(x,H)} & = x - H_1 \cdot P_0, \\
%c_t^{(x,H)} & = (H_t-H_{t+1})\cdot P_t, \ \ \mbox{ for } t \ge 1
%\end{align*}
%and post-consumption wealth process $X^(x,H)$ defined by
%$$
%X_t^{(x,H)} =   H_{t+1} \cdot P_{t} \ \ \mbox{ for } t \ge 0.
%$$
\end{definition}

 \begin{remark} 
 The
idea is that  the investor brings the initial capital $X_0$ to 
the market.  He then consumes a non-negative amount $C_0$, and 
invests the remainder into the market by choosing a vector of 
portfolio weights $H_1 \in \RR^n$ such that $H_1 \cdot P_0 = X_0-C_0$.

At each future time $t \ge 1$, the investor's pre-consumption
wealth is just the market value  $X_t = H_t \cdot P_t$ of his current holdings.
He again chooses a non-negative amount $C_t$ to consume,
and use the post-consumption wealth $X_t- C_t = H_{t+1} \cdot P_t$ to rebalance his 
portfolio to be held until time $t+1$ when the market clock ticks again.

The assumption that the strategy $H$ is predictable models the fact that the
investor is not clairvoyant.
%It could be that
%$x = H_0 \cdot P_0$ for some non-random portfolio $H_0 \in \RR^n$;   however, 
%the formulation we have chosen allows for us to deal with the (very unrealistic) case
%where $P_0 = 0$ and $x > 0$ without special considerations.
Note that while a strategy $H$ is predictable by definition, we have chosen the convention
that the consumption stream $C$ is merely adapted.
\end{remark}

\begin{remark}  The `free-disposal' assumption, or allowing agents to
`throw away money', has long been a part of the  classical  arbitrage theory
in which the existence of a num\'eraire asset is assumed.
Indeed, in infinite-horizon discrete-time models or in continuous-time models,
this assumption is indispensable to the
 formulation of a dual characterisation of no-arbitrage. 
 See for instance the paper of Schachermayer
\cite{Sch}.  We will see that even for a finite-horizon discrete-time model,
such a free-disposal assumption is needed when there is no num\'eraire.
\end{remark}

Supposing that an investor in this market has a preference relation over the set of 
investment-consumption strategies, his goal then is to find the best strategy
given his budget constraint. 
 To fix ideas, suppose that his preference
has a numerical representation, so that strategy $H$ is preferred to strategy $H'$
if and only if $U(H) > U(H')$ where $U$
has the additive expected utility form
\begin{equation}\label{eq:U}
U(H) =  \EE \left[ u(C_0, \ldots, C_T) \right]
\end{equation}
where $X_0$ is his initial wealth and $T > 0$ is a fixed, non-random time horizon, and
where $C_0 = X_0-H_1 \cdot P_0$ and $C_t = (H_t-H_{t+1})\cdot P_t$ for $t \ge 1$ is
the investor's consumption.
Assuming that he is not permitted to have negative wealth after time $T$, 
the investor's problem is to maximise $U(H)$ subject to the budget constraint
$H_1 \cdot P_0 \le X_0$ and the transversality condition $H_{T+1} = 0$.  For future
reference, we will let
\begin{equation}\label{eq:feas}
\mathcal{H}_{X_0,T} = \{ H: \mbox{ self-financing, }  H_1 \cdot P_0 \le X_0, H_{T+1} = 0 \}
\end{equation}
be the set of feasible solutions to this problem.
 
The notions of arbitrage are intimately related to whether this
optimal investment problem has a solution. Furthermore, we
 will see that martingale deflators are the dual variables for this optimisation
problem.  The fundamental theorems stated below establish the connection between
the absence of  arbitrage and the existence of martingale deflators.

\subsection{Martingale deflators and optimal investment}\label{se:optinv}
We now come to the definition of a martingale deflator.
For technical reasons, it will also be useful to introduce the related concept
 of a local martingale deflator.

\begin{definition}
 A (local) martingale deflator is a strictly positive adapted process $Y =(Y_t)_{t \ge 0}$
such that the $n$-dimensional process 
$
PY = \left(P_t Y_t\right)_{t \ge 0}
$
is a (local) martingale. 
\end{definition}

\begin{remark}
A concept  very closely related to that of a martingale deflator is that of a equivalent
martingale measure, whose definition is recalled in section \ref{se:num} below.
While the definition of an equivalent martingale measure 
is highly asymmetric, in the sense that it gives a distinguished
role to one asset among the $n$ total assets, note that the definition of a martingale deflator
is perfectly symmetric in the sense that all assets are treated equally.
\end{remark}

The key result underpinning many of the arguments to come is the following proposition:
\begin{proposition}\label{th:mart}
Let $Y$ be a local martingale deflator and $H$ an investment-consumption strategy.
Fix $X_0 \le H_1 \cdot P_0$ and let $C_t = X_t - H_{t+1} \cdot P_t$ for $t \ge 0$,
where $X_t = H_t \cdot P_t$ for $t \ge 1$.

The process $M$ defined by
$$
M_t = X_t Y_t + \sum_{s=0}^{t-1} C_s Y_s
$$
is a local martingale. 
In particular, if $H_{T+1}=0$ for some non-random $T > 0$, then
$$
\EE \left( \sum_{s=0}^{T} C_s Y_s \right) = X_0 Y_0.
$$
\end{proposition}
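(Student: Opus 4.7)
The plan is to verify first that
\[
M_{t+1} - M_t = H_{t+1} \cdot (P_{t+1}Y_{t+1} - P_t Y_t) \quad \text{for all } t \ge 0.
\]
Expanding the definition of $M$, the increment equals $X_{t+1} Y_{t+1} - X_t Y_t + C_t Y_t$, and this factorises into the desired form after substituting $X_{t+1} = H_{t+1} \cdot P_{t+1}$ together with the relation $X_t - C_t = H_{t+1} \cdot P_t$. The latter relation is, for $t \ge 1$, just the definition of $C_t$ combined with $X_t = H_t \cdot P_t$; for $t = 0$ it is the hypothesis $X_0 - C_0 = H_1 \cdot P_0$ built into the statement.

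\textbf{Step 2: Local martingale property.} The identity above exhibits $M - M_0$ as the predictable transform (discrete stochastic integral) of the local martingale $PY$ by the predictable process $H$. I would then invoke the standard discrete-time fact, recorded among the technical lemmas of Section \ref{se:tech}, that such transforms preserve the local martingale property. This yields the first claim.

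\textbf{Step 3: Expectation identity.} If $H_{T+1} = 0$ then $X_{T+1} = 0$, so $M_{T+1} = \sum_{s=0}^T C_s Y_s$. Choose a localising sequence $\tau_n \uparrow \infty$ for $M$; optional stopping gives $\EE M_{(T+1)\wedge\tau_n} = X_0 Y_0$ for each $n$. The main technical hurdle — the step I expect to require the most care — is passing to the limit as $n \to \infty$, since $M_{T+1}$ is not a priori integrable. The crucial leverage is the self-financing condition, which forces $C_s \ge 0$ for each $s \ge 1$ (because $Y_s > 0$ and $X_s - C_s = H_{s+1} \cdot P_s \le H_s \cdot P_s = X_s$), so the tail sum $\sum_{s=1}^T C_s Y_s$ is non-negative while the boundary contribution $C_0 Y_0$ is $\ff_0$-measurable and a.s.\ finite. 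Combined with a Fatou or monotone-convergence argument applied to the localised identity, this sign structure should deliver both integrability of $M_{T+1}$ and the equality $\EE M_{T+1} = X_0 Y_0$.
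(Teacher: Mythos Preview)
Your Steps 1 and 2 are correct and coincide with the paper's argument: the paper records the summed form $M_t = X_0 Y_0 + \sum_{s=1}^t H_s\cdot(P_sY_s - P_{s-1}Y_{s-1})$ and then invokes Proposition~\ref{th:mart-trans}, exactly the predictable-transform lemma you cite.

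Step 3 has a genuine gap. You correctly identify non-negativity of the consumption terms as the lever, but the tool you propose --- Fatou or monotone convergence along a localising sequence $\tau_n$ --- does not apply here. For intermediate times $t \le T$ the value $M_t = X_t Y_t + \sum_{s<t} C_s Y_s$ contains the term $X_t Y_t = (H_t\cdot P_t)\,Y_t$, and since asset prices in this paper may take either sign, this term carries no sign control whatsoever. Consequently the stopped variables $M_{(T+1)\wedge\tau_n}$ admit no uniform integrable lower bound, and Fatou's lemma cannot be invoked. The terminal non-negativity you isolate helps only at time $T+1$, not along the trajectory where the localising sequence may stop.

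The paper takes a shorter route that avoids localising sequences altogether. One observes that $M_T = \sum_{s=0}^T C_s Y_s \ge 0$ (the inequality in the hypothesis should be read as $H_1\cdot P_0 \le X_0$, so that $C_0 \ge 0$ as well; this is consistent with the feasible set $\mathcal{H}_{X_0,T}$) and then appeals directly to Proposition~\ref{th:non-neg}: any discrete-time local martingale that is non-negative at a deterministic terminal time is a true martingale on that horizon. That proposition (Jacod--Shiryaev) is established by backward induction via generalised conditional expectations --- $M_{T}\ge 0$ forces $M_{T-1}=\EE(M_T\mid\ff_{T-1})\ge 0$, and so on --- rather than by passing to the limit along a localising sequence. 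Replace your Fatou sketch with a direct appeal to Proposition~\ref{th:non-neg}.
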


\begin{proof}
Note that by rearranging the sum, we have the identity
$$
M_t = X_0 Y_0 + \sum_{s=1}^t  H_s \cdot ( P_s Y_s - P_{s-1} Y_{s-1}).
$$
If $Y$ is a local martingale deflator, then $M$ is a local martingale
by Proposition \ref{th:mart-trans}. 

For the second claim, note that if $H$ is a self-financing
investment-consumption strategy with $H_{T+1} = 0$, then 
$C_T = X_T$ and hence
\begin{align*}
M_T &=  \sum_{s=1}^{T} C_s Y_s \\
& \ge 0.
\end{align*}  
By Proposition \ref{th:non-neg}, the process $(M_t)_{0 \le t \le T}$ is
a true martingale and hence by the optional sampling theorem we have
\begin{align*}
 \EE( M_T ) & = M_0 \\
& = X_0 Y_0.
\end{align*}
\end{proof}

We now turn our attention to the  utility
optimisation problem described above.  In particular, we will see
that martingale deflators play the role of a dual variable or Lagrange multiplier.  The
following proposition is not especially new, but it does again
highlight the fact that utility maximisation theory does not 
depend on the existence of a num\'eraire.
In particular, the following theorem gives us the interpretation of a martingale deflator as the 
marginal utility of an optimal consumption stream.
 
\begin{theorem}\label{th:util}
Let the set of investment-consumption strategies $\mathcal{H}_{X_0,T}$
be defined by equation \eqref{eq:feas}, and let $U(H)$ be defined by equation \eqref{eq:U} 
for  $H \in \mathcal{H}_{X_0,T}$   such that the expectation
is well defined.  Furthermore, suppose that the utility function $u$ is   convex and differentiable
such that $c_t \mapsto u(c_0, \ldots, c_T)$ is strictly increasing  
for each $t$.  

If there is a local martingale deflator $Y$ and a feasible strategy $H^* \in \mathcal{H}_{X_0,T}$
 such that
$$
Y_t = \EE \left( 
\frac{\partial}{\partial c_t} u(C^*_0 , \ldots, C^*_T ) | \ff_t \right) 
$$
in the sense of generalised conditional expectation (recalled in section \ref{se:tech})
where 
$C^*_0 = X_0-H^*_1 \cdot P_0$ and $C^*_t = (H^*_t-H^*_{t+1})\cdot P_t$ for $t \ge 1$,
then $H^*$ is optimal in the sense that
$$
U(H^*) \ge U(H)
$$
for all feasible $H$.
\end{theorem}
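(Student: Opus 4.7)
The plan is to treat $Y$ as the Lagrange-multiplier process for the budget constraints and to close the gap via Proposition \ref{th:mart}. I would start from the gradient inequality (the hypothesis printed as `convex' must be a slip for \emph{concave}, since otherwise the inequality would point the wrong way): differentiability and concavity give, for any feasible $H$ with consumption stream $(C_0,\ldots,C_T)$,
\[
u(C_0,\ldots,C_T) - u(C^\ast_0,\ldots,C^\ast_T) \;\le\; \sum_{t=0}^T \frac{\partial u}{\partial c_t}(C^\ast_0,\ldots,C^\ast_T)\,(C_t - C^\ast_t).
\]
Taking expectations yields an upper bound on $U(H) - U(H^\ast)$.

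Next I would replace each marginal utility factor by $Y_t$. Since $H$ is predictable and $P$ is adapted, both $C_t$ and $C^\ast_t$ are $\ff_t$-measurable, so the tower property together with the hypothesis $Y_t = \EE(\partial_{c_t} u(C^\ast_0,\ldots,C^\ast_T)\mid \ff_t)$ gives
\[
\EE\!\left[\frac{\partial u}{\partial c_t}(C^\ast_0,\ldots,C^\ast_T)\,(C_t - C^\ast_t)\right] \;=\; \EE[Y_t\,(C_t - C^\ast_t)].
\]
Summing over $t$ produces $U(H) - U(H^\ast) \le \EE \sum_{t=0}^T Y_t (C_t - C^\ast_t)$. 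Now I would invoke Proposition \ref{th:mart} separately for $H^\ast$ and for $H$. Both are feasible, both satisfy the terminal condition $H_{T+1} = 0 = H^\ast_{T+1}$, and both satisfy the time-$0$ budget inequality; absorbing any slackness at time $0$ into a non-negative initial consumption (the positive multiplier $Y_0$ accounts for this correctly) the proposition yields
\[
\EE\sum_{t=0}^T C_t Y_t \;=\; X_0 Y_0 \;=\; \EE\sum_{t=0}^T C^\ast_t Y_t.
\]
Subtracting, the upper bound above is zero, so $U(H) \le U(H^\ast)$.

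The main delicate point is that $\partial_{c_t} u(C^\ast_0,\ldots,C^\ast_T)$ is not assumed integrable, so the `pull out what is known' step must be read in the generalised conditional expectation sense recalled in section \ref{se:tech}. I would handle this by exploiting the monotonicity hypothesis, which forces each marginal utility to be non-negative, and by splitting $C_t - C^\ast_t$ into its positive and negative parts so that each resulting product is of constant sign; the generalised tower and pull-out identities apply to each non-negative piece, and finiteness of $U(H)$ and $U(H^\ast)$ makes the subtraction at the end unambiguous.
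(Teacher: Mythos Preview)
Your approach is essentially the same as the paper's: start from the gradient inequality, replace the marginal utilities by $Y_t$ via the tower/slot properties, and then use Proposition~\ref{th:mart} to see that the deflated consumption streams of $H$ and $H^\ast$ both have expected value $X_0 Y_0$, so the first-order term vanishes. You are also right that ``convex'' in the hypothesis is a slip for \emph{concave}; the paper's own proof uses the inequality $u(C)\le u(C^\ast)+\sum_t(C_t-C^\ast_t)\,\partial_{c_t}u(C^\ast)$, which is the concave bound.

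One small simplification: the paper avoids your positive/negative splitting by never forming the difference $C_t-C^\ast_t$ inside the conditional expectation. Since $C_t\ge 0$ and $\partial_{c_t}u(C^\ast)>0$, the slot property applies directly to the product $C_t\,\partial_{c_t}u(C^\ast)$, giving $\EE[C_t\,\partial_{c_t}u(C^\ast)]=\EE[C_t Y_t]$; summing and applying Proposition~\ref{th:mart} yields $X_0 Y_0$, and the identical computation for $C^\ast$ gives the same value. In fact your splitting is not even needed in your version: strict monotonicity makes $\partial_{c_t}u(C^\ast)\ge 0$, so $\EE(|\partial_{c_t}u(C^\ast)|\mid\ff_t)=Y_t<\infty$ almost surely, and the generalised slot property (Proposition~\ref{th:slot}) already lets you pull out the $\ff_t$-measurable factor $C_t-C^\ast_t$ in one step.
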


\begin{proof}
Let $H$ be feasible with corresponding consumption stream $C$.  
First observe that since $C_t \ge 0$ and $\frac{\partial}{\partial c_t} u(C^*) > 0$ we have
by the slot property (Proposition \ref{th:slot} below) 
$$
\EE\left( C_t \frac{\partial}{\partial c_t} u(C^*) | \ff_t \right) = C_t Y_t
$$
and by the tower property (Proposition \ref{th:tower}) that
\begin{align*}
\EE\left(  \sum_{t=0}^{T} C_t \frac{\partial}{\partial c_t} u(C^*) \right)
& = \EE \left(\sum_{s=0}^{T} C_s Y_s \right) \\
& = X_0
\end{align*}
where we have used Proposition \ref{th:mart} between the second and third line. 
 By the convexity of the function $u$ we have
$$
u(C) \le u(C^*) +  \sum_{t=0}^{T} (C_t-C^*_t) \frac{\partial}{\partial c_t} u(C^*)
$$
almost surely.  The conclusion follows from taking the expectation of 
both sides and applying the first observation to cancel the sum.
\end{proof}

\subsection{Arbitrage and the first fundamental theorem}\label{se:1ftap}

We introduce the following definition:

\begin{definition} An investment-consumption   arbitrage is a predictable process
$H$ such that there exists a non-random  time horizon $T > 0$ with the properties that
\begin{itemize}
 \item $H \in \mathcal{H}_{0, T}$, 
\item $\PP\left(  (H_t - H_{t+1})\cdot P_t  > 0 \mbox{ for some } 0 \le t \le T \right) > 0$
where $H_0= 0$.
\end{itemize}
\end{definition}

Suppose that $H^{\mathrm{arb}}$ is an arbitrage according to the above definition.  
If $H \in \mathcal{H}_{X_0,T}$ for some initial wealth $X_0$ and time horizon
 $T$,   then $H+H^{\mathrm{arb}} \in  \mathcal{H}_{X_0,T}$.
Furthermore, if the functions $c_t \mapsto u(c_0, \ldots, c_T)$ are strictly increasing   
 then $U(H+H^{\mathrm{arb}}) > U(H)$,   and hence the strategy
$H+H^{\mathrm{arb}}$ is strictly preferred to $H$. 
In particular, the optimal investment problem 
cannot have a solution.  
 In section \ref{se:proofs} we show that a
certain converse is true: if there is no arbitrage, then it is possible to formulate an 
optimal investment problem that has a maximiser.

We now come to our version of the first fundamental theorem of asset pricing
for investment-consumption strategies.

\begin{theorem}\label{th:1FTAP} The following are equivalent:
\begin{enumerate}
\item The market has no investment-consumption arbitrage.
\item There exists a local martingale deflator.
\item There exists a martingale deflator.
\item For every non-random $T > 0$ and every positive adapted process $(\eta_t)_{0 \le t \le T}$,
there exists a martingale deflator $(Y_t)_{0 \le t \le T}$ such that $Y_t \le \eta_t$ almost
surely for all $0 \le t \le T$.
\end{enumerate}
\end{theorem}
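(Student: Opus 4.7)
The plan is to establish the cycle (4) $\Rightarrow$ (3) $\Rightarrow$ (2) $\Rightarrow$ (1) $\Rightarrow$ (4). Three of these are relatively quick. First, (3) $\Rightarrow$ (2) is tautological. Next, (2) $\Rightarrow$ (1) follows from Proposition \ref{th:mart}: given a local martingale deflator $Y$ and a candidate arbitrage $H$ (so $H_0 = H_{T+1} = 0$ and the induced consumption stream $C$ is non-negative with some $C_s > 0$ on a set of positive probability), taking $X_0 = 0$ in Proposition \ref{th:mart} yields $\EE\bigl(\sum_{s=0}^T C_s Y_s\bigr) = 0$, which together with $C_s Y_s \ge 0$ and the strict positivity of $Y$ forces $C_s = 0$ almost surely for every $s$, a contradiction. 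Finally, (4) $\Rightarrow$ (3) follows by pasting: apply (4) on each horizon $N = 1, 2, \ldots$ with $\eta_t \equiv 1$ to obtain deflators $Y^{(N)}$ on $[0, N]$, and define a global process $Z$ by $Z_t = \alpha_N Y^{(N)}_t$ for $N - 1 \le t \le N$, where $\alpha_1 = 1$ and $\alpha_{N+1} := \alpha_N Y^{(N)}_N / Y^{(N+1)}_N$. Each $\alpha_N$ is $\ff_{N-1}$-measurable and strictly positive, so $PZ$ is a martingale on each block $[N-1, N]$, and the matching $\alpha_N Y^{(N)}_N = \alpha_{N+1} Y^{(N+1)}_N$ extends this to a global martingale property.

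The substantive step is (1) $\Rightarrow$ (4), for which I would adopt the Rogers-style route announced in the introduction. Given $T$ and the positive adapted $\eta$, consider the concave utility
$$U(H) = \EE\left[\sum_{t=0}^T \eta_t \bigl(1 - e^{-C_t}\bigr)\right], \quad H \in \mathcal{H}_{1, T},$$
whose marginal $\partial_{c_t} u = \eta_t e^{-c_t}$ is bounded above by $\eta_t$ and strictly positive on $[0, \infty)$. Granting existence of a maximizer $H^*$ with consumption stream $C^*$, consider perturbing $H^*_{t+1}$ by any bounded $\ff_t$-measurable $h \in \RR^n$ while leaving the subsequent holdings unchanged: this shifts $C^*_t$ by $-h \cdot P_t$ and $C^*_{t+1}$ by $h \cdot P_{t+1}$ and leaves all other consumption dates unaffected. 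The first-order optimality condition therefore reads
$$\EE\bigl[h \cdot \bigl(\eta_{t+1} e^{-C^*_{t+1}} P_{t+1} - \eta_t e^{-C^*_t} P_t\bigr)\bigr] = 0$$
for every such $h$, so defining $Y_t := \eta_t e^{-C^*_t}$ produces a local martingale deflator satisfying $Y_t \le \eta_t$ automatically. A possible further step, using the freedom to first replace $\eta$ by a sufficiently small adapted process such as $\eta_t \wedge (1 + |P_t|)^{-1}$ to secure integrability of $PY$, upgrades the local martingale to a true martingale, at which point the original $\eta$-bound is restored.

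The principal obstacle is establishing the existence of the maximizer $H^*$ under only the no-arbitrage hypothesis and without recourse to general measurable selection or separation theorems. The natural route is backward induction on time, solving one-period problems conditionally via a Bellman-style recursion and assembling the one-step optimizers into a predictable process. At each step the no-arbitrage hypothesis is used to close the set of feasible next-period portfolio weights in $\RR^n$, so that the bounded strictly concave period-$t$ integrand attains its maximum on a relatively compact set; the non-negativity constraint $C^*_t \ge 0$ poses no singularity because $u_t'(0) = \eta_t$ is finite and positive. The technical cost is the hands-on extraction of a predictable selector at each step, which is precisely the bookkeeping flagged in the introduction as the price of avoiding general measurable-selection machinery.
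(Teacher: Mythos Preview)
Your cycle has a genuine gap at (4) $\Rightarrow$ (3). The pasted process $Z_t = \alpha_N Y^{(N)}_t$ does satisfy $\EE(P_{t+1}Z_{t+1}\mid\ff_t)=P_tZ_t$ in the generalised sense on each block, so by Proposition \ref{th:locmart} it is a \emph{local} martingale deflator. But the scaling factors $\alpha_{N+1}=\alpha_N Y^{(N)}_N/Y^{(N+1)}_N$ involve the reciprocal of $Y^{(N+1)}_N$, and condition (4) supplies only \emph{upper} bounds on deflators, never lower bounds. Thus the $\alpha_N$ are uncontrolled, $P_tZ_t$ need not be integrable, and $Z$ is in general only a local martingale deflator. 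The paper flags precisely this subtlety in the paragraph following the theorem, and closes the gap not by pasting but by proving (2) $\Rightarrow$ (3) via Kabanov's theorem (Theorem \ref{th:kabanov}): given a local martingale deflator $Y$, there is an equivalent measure $\QQ$ under which $PY$ is a true martingale, and then $\hat Y_t = Y_t\,\EE^{\PP}(d\QQ/d\PP\mid\ff_t)$ is a true martingale deflator. Your pasting in fact proves (4) $\Rightarrow$ (2), which suffices once Kabanov is invoked, but it does not reach (3) on its own.

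On (1) $\Rightarrow$ (4), your Rogers-style idea is the same as the paper's, but there is a second lacuna. You optimise over $\mathcal H_{1,T}$, which by definition enforces $C_t\ge 0$; if the maximiser sits at the boundary $C^*_t=0$, only one-sided perturbations of $H_{t+1}$ remain feasible and the first-order condition becomes an inequality, not the martingale identity you want. The paper avoids this by working one period at a time with the \emph{unconstrained} functional $F(h)=e^{h\cdot P_{t-1}}+\EE[e^{-h\cdot P_t}\hat\zeta\mid\ff_{t-1}]$ over all $h\in\RR^n$ (this is condition (5) of Theorem \ref{th:1FTAP-tech}); no sign constraint on consumption appears, so the first-order equality is clean, and the no-arbitrage hypothesis is used to show the minimiser exists via the measurable Bolzano--Weierstrass device. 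The easy fix for your version is simply to drop the self-financing inequality and optimise your exponential utility over all predictable $H$ with $H_{T+1}=0$; then your argument and the paper's coincide at the one-period level, differing only in that the paper assembles the one-step pieces through condition (5) while you propose a direct multi-period recursion.
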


The equivalence of  (1) and (3) above is the real punchline of the story.  Condition (2) 
is some what technical, but is useful since it is easier to check than condition (3).
  Condition (4)   will prove very useful in the next
section  since it implies that for any $\ff_T$-measurable random variable $\xi_T$ 
there exists martingale deflator $Y$ such that $\xi_T Y_T$ is integrable.
Notice that although it is true that (4) implies  (3), the argument is not as
trivial as it might first seem, since condition (4) holds for each fixed time horizon $T$, while condition
(3) says that $(P_t Y_t)_{t \ge 0}$ is a martingale over an infinite horizon.

We now prove the equivalence of conditions (2) and (3).  
\begin{proof}  [Proof of (2) $\Leftrightarrow$ (3) of Theorem \ref{th:1FTAP}]
Since a martingale deflator is also 
a local martingale deflator, we need only prove (2) $\Rightarrow$ (3).  

Let $Y$ be a local martingale deflator, so that $PY$ is a local martingale.
Note that if we were to assume that each asset price is non-negative, 
so that $P_t^i \ge 0$ almost surely
for all $1 \le i \le n$ and $t \ge 0$, we could invoke Proposition \ref{th:non-neg}
to conclude that $PY$ is a true martingale and, hence, that $Y$ is a 
true martingale deflator.  In the general case, we appeal to 
Kabanov's theorem \cite{Kabanov}, quoted as Theorem \ref{th:kabanov} below, which
says that there exists an equivalent measure $\QQ$
such that $PY$ is a true martingale under $\QQ$.    Letting
$$
\hat Y_t = Y_t \EE^{\PP} \left( \frac{d\QQ}{d\PP} | \ff_t \right),
$$
we see that $P\hat Y$ is a true martingale under $\PP$, and hence
$\hat Y$ is a true martingale deflator.
\end{proof}

We now prove that the existence of a local martingale deflator implies the absence of investment-consumption
arbitrage. This is a well-known argument, but we include it here
for completeness.

\begin{proof}[Proof of (2) $\Rightarrow$ (1)  and (4) $\Rightarrow$ (1) of Theorem \ref{th:1FTAP}]
Fix $T > 0$, and fix a strategy $H \in \mathcal{H}_{0,T}$. 
If $(Y_t)_{0 \le t \le T}$ is a local martingale deflator then, 
 Proposition \ref{th:mart} implies that 
$$
\EE \left[ \sum_{s=0}^{T} C_s Y_s \right] =   0
$$
where   $C_t = (H_t-H_{t+1})\cdot P_t$.
Since $Y_t > 0$ and $C_t \ge 0$ almost surely for all $t \ge 0$, the 
pigeon-hole principle and the equality
above imply $C_t = 0$ almost surely for all $0 \le t \le T$.  Hence $H$
is not an investment-consumption arbitrage.
\end{proof}

The proofs of  (1) $\Rightarrow$ (2), that   no-arbitrage implies both the existence of a local martingale deflator, 
as well as of (1) $\Rightarrow$ (4), that the existence over any finite time horizon of suitably bounded martingale
deflator, are more technical and deferred to section \ref{se:proofs}.

\begin{remark}
Note that the one period case of the (1) $\Rightarrow$ (4) implication of our num\'eraire-free
fundamental theorem is
\begin{proposition}\label{th:numfree1}  Let $p \in \RR^n$ be constant
and $P$ be a random vector valued in $\RR^n$ 
with the property that
$$
h \cdot p  \le 0 \le h \cdot P \mbox{ implies } h \cdot p = 0 = h \cdot P
$$
Then there exists a bounded $Y > 0$ such that $\EE(PY) = p$.
\end{proposition}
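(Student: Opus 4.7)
The plan is to prove this by a finite-dimensional separation argument applied to the moment cone of bounded positive random variables, avoiding utility maximisation altogether.

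Two preliminary reductions: (i) quotient $\RR^n$ by the subspace $N = \{h \in \RR^n : h \cdot p = 0 \text{ and } h \cdot P = 0 \text{ a.s.}\}$, which is irrelevant to both hypothesis and conclusion, so assume $N = \{0\}$; (ii) pass to an equivalent probability measure $\tilde\PP$ with $\PP$-bi-bounded Radon--Nikodym density (for example, $d\tilde\PP/d\PP$ proportional to $1/(1+|P|)^{n+1}$), so that $\EE|P| < \infty$. The no-arbitrage hypothesis is preserved under equivalence, and a bounded strictly positive deflator under $\tilde\PP$ multiplied by the bounded density gives one under $\PP$; so one may assume $\EE|P| < \infty$.

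Consider the convex cone $K = \{(\EE[Y], \EE[PY]) : Y > 0 \text{ bounded}\} \subseteq \RR \times \RR^n$ and the ray $R = \{(c, p) : c > 0\}$. The proposition is exactly the statement $K \cap R \ne \emptyset$, since any $(c, p) \in K$ furnishes a bounded $Y > 0$ with $\EE[PY] = p$. Suppose for contradiction that $K \cap R = \emptyset$. Finite-dimensional Hahn--Banach produces a non-zero $(a, b) \in \RR \times \RR^n$ such that $a \EE[Y] + b \cdot \EE[PY] \le 0$ for all bounded $Y > 0$ and $ac + b \cdot p \ge 0$ for all $c > 0$. The first inequality yields $a + b \cdot P \le 0$ almost surely (test against $Y = \mathbf{1}_{\{a + b \cdot P > 0\}} + \epsilon$ and let $\epsilon \downarrow 0$); the second yields $a \ge 0$ (letting $c \to \infty$) and $b \cdot p \ge 0$ (letting $c \to 0^+$).

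The main obstacle --- quite mild here --- is extracting a contradiction from these inequalities via the no-arbitrage hypothesis, using the candidate portfolio $h := -b$. If $a = 0$, then $h \cdot P = -b \cdot P \ge 0$ almost surely and $h \cdot p = -b \cdot p \le 0$, so the no-arbitrage hypothesis forces $h \cdot p = h \cdot P = 0$ a.s., i.e.\ $b \in N = \{0\}$; combined with $a = 0$, this contradicts non-triviality of $(a, b)$. If $a > 0$, then $h \cdot P \ge a > 0$ strictly almost surely, while $h \cdot p \le 0$, so no-arbitrage would force $h \cdot P = 0$ a.s., again a contradiction. Both cases fail, so $K \cap R$ is non-empty, and any element of the intersection provides the required bounded strictly positive $Y$ with $\EE[PY] = p$.
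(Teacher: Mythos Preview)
Your proof is correct, and the approach is genuinely different from the paper's. The paper derives this proposition from the classical one-period fundamental theorem (Proposition~\ref{th:num1}) by a clever device: it enlarges the probability space with a single new state $\Delta$ of mass $\tfrac12$, sets $X=P$ on $\Omega$ and $X(\Delta)=-p$, checks that the two-sided no-arbitrage hypothesis on $(p,P)$ becomes the one-sided hypothesis $a\cdot X\ge 0\Rightarrow a\cdot X=0$ on the enlarged space, and then reads off $Y$ from the classical deflator $Z$. Your argument instead works directly by finite-dimensional separation of the moment cone $K=\{(\EE Y,\EE[PY]):Y>0\text{ bounded}\}$ from the ray $R=\{(c,p):c>0\}$ in $\RR\times\RR^n$, extracting the arbitrage vector $h=-b$ from the separating functional. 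The paper's route is shorter and exhibits the precise link with the num\'eraire-dependent statement, but treats that statement as a black box; your route is self-contained and makes the duality explicit. Two minor remarks: the exponent $n+1$ in your density is overkill (any exponent $>1$ already makes $|P|/(1+|P|)^k$ bounded), and the density you wrote is bounded above but not below, so ``bi-bounded'' is a slight misnomer---however only the upper bound and strict positivity are used, so the argument is unaffected. Your preliminary quotient by $N$ is exactly what is needed to make the $a=0$ case of the separation produce a genuine contradiction without appealing to \emph{proper} separation.
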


It should be noted that this case can be derived from the classical num\'eraire-dependent one-period fundamental theorem:
\begin{proposition}\label{th:num1}
Let $X$ be a random vector valued in $\RR^d$ with the property that 
$$
a \cdot X \ge 0 \mbox{ implies } a \cdot X = 0.
$$
Then there exists a bounded $Z > 0$ such that $\EE(XZ) = 0$.
\end{proposition}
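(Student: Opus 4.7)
The plan is to use the utility-maximization approach, attributed to Rogers \cite{rogers} in the paper, with the bounded exponential utility $\phi(x) = 1 - e^{-x}$. Specifically, I would maximize $f(a) := \EE[\phi(a \cdot X)]$ over $a \in \RR^d$; a maximizer $a^*$ will then satisfy the first-order condition $\EE[X e^{-a^* \cdot X}] = 0$, so $Z := e^{-a^* \cdot X}$ is the desired random variable. The two technical obstacles are ensuring $Z$ is bounded (which needs $X$ bounded) and showing the maximum is attained (which needs coercivity of $f$).

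For the first obstacle, I would reduce to bounded $X$ by replacing $X$ with $\tilde X := X/(1+|X|)$. The rescaling preserves the sign of $a \cdot X$ and hence the hypothesis; and if a bounded $\tilde Z > 0$ with $\EE[\tilde X \tilde Z] = 0$ is produced, then $Z := \tilde Z /(1+|X|)$ is bounded, strictly positive, and satisfies $\EE[X Z] = \EE[\tilde X \tilde Z] = 0$. So assume $|X| \le M$ almost surely, in which case $f(a) = \EE[1 - e^{-a \cdot X}]$ is finite on all of $\RR^d$, and dominated convergence makes $f$ continuously differentiable with $\nabla f(a) = \EE[X e^{-a \cdot X}]$.

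The second obstacle is the main one. Let $L := \{a \in \RR^d : a \cdot X = 0 \text{ a.s.}\}$ and note $f$ is constant in directions lying in $L$, so I would restrict the optimization to $a \in L^\perp$. For $a \in L^\perp \setminus \{0\}$, the hypothesis forces $\PP(a \cdot X < 0) > 0$, since otherwise $a \cdot X = 0$ a.s., placing $a \in L \cap L^\perp = \{0\}$. Coercivity of $f$ on $L^\perp$ then follows by extracting a unit-norm direction $\hat a \in L^\perp$ from any diverging sequence $a_n \in L^\perp$ and applying Fatou's lemma on the set $\{\hat a \cdot X < -\delta\}$ of positive probability, on which $e^{-a_n \cdot X} \to \infty$. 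Combined with $f(0) = 0$, continuity of $f$ then yields a maximizer $a^* \in L^\perp$. Because $X \in L^\perp$ almost surely, $\nabla f(a^*) = \EE[X e^{-a^* \cdot X}]$ already lies in $L^\perp$, and the first-order condition on $L^\perp$ forces $\nabla f(a^*) = 0$ in $\RR^d$, delivering the required $Z$.
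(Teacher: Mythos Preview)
Your proof is correct. Note, however, that the paper does not actually supply its own proof of this proposition: it is stated as the classical one-period fundamental theorem and used as a known input to derive Proposition~\ref{th:numfree1}. So there is no paper proof to compare against directly.

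That said, your approach is precisely the Rogers utility-maximisation method \cite{rogers} that the paper itself adopts in Section~\ref{se:proofs} to prove the implication (1) $\Rightarrow$ (5) of Theorem~\ref{th:1FTAP-tech}. The paper's version differs only cosmetically: instead of reducing to bounded $X$, it multiplies by a Gaussian damping factor $e^{-\|P_t\|^2/2}$ to enforce integrability; instead of maximising $\EE[1-e^{-a\cdot X}]$, it minimises a sum of two exponential terms (one for each of the two periods appearing in condition (5)); and instead of your orthogonal-complement argument, it first minimises a perturbed strictly convex function $F_k = F + \|h\|^2/k$ and then passes to the limit via the measurable Bolzano--Weierstrass selection (Proposition~\ref{th:select}). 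Your restriction to $L^\perp$ is a clean shortcut that avoids this perturbation step, at the cost of relying on the elementary fact that $X \in L^\perp$ almost surely. Both routes handle the non-uniqueness of the minimiser in the same spirit, and the coercivity argument via extraction of a unit direction and Fatou is identical in substance.
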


\begin{proof}[Proof of Proposition \ref{th:numfree1}] Given
the probability space $(\Omega, \ff, \PP)$, let $\Delta$ be some
new state not in $\Omega$, and enlarge
the space by letting $\hat \Omega = \Omega \cup \{ \Delta \}$,
$\hat \ff = \sigma( \ff, \{\Delta\} )$ and let $\hat \PP$ be the measure on $\hat \ff$
such that
$$
\hat{\PP}(A) = \frac{1}{2} \PP(A), \ \ \mbox{ if } A \in \ff
$$
and 
$$
\hat{\PP}\{ \Delta\} = \frac{1}{2}.
$$
Finally, define
$$
X(\hat \omega) = \left\{ \begin{array}{ll}
 P(\hat \omega) & \mbox{ if } \hat \omega \in \Omega \\
 - p & \mbox{ if } \hat \omega = \Delta
\end{array} \right.
$$
Note that if $a \cdot X \ge 0 \ \ \hat \PP$ a.s. then 
$a \cdot P \ge 0 \ \ \PP$-a.s and $a \cdot p \le 0$.
By assumption then, $a \cdot P = 0 \ \ \PP$-a.s and $a \cdot p = 0$ 
which again implies $a \cdot X = 0 \ \ \hat \PP$ a.s.
Hence by Proposition \ref{th:num1} there exists a bounded $Z$ on $\hat\Omega$ such that
$$
\hat \EE( Z X ) = 0.
$$
Let $Y$ be the restriction of $Z$ to $\Omega$ and $y=Z(\Delta)$.
The above equation becomes
$
\EE( Y P) = y p
$
as desired.
\end{proof}
Unfortunately, it is difficult to adapt this proof to the multi-period case. 
\end{remark}

\subsection{Super-replication}
We now turn to the 
dual characterisation of super-replication.  It is a classical
theorem of the field, but we include it here since it might be surprising to know that it holds without the
assumption of the existence of a num\'eraire.

\begin{theorem}\label{th:superrep}   Suppose there exists at least one martingale deflator. 
Let $\xi$ be an adapted process such that $\xi Y$ is a supermartingale for all  martingale deflators $Y$ 
for which $\xi Y$ is an integrable process. Then 
there exists an investment-consumption strategy $H$  such that 
\begin{align*}
H_1 \cdot P_0 &\le \xi_0 \\
H_{t+1} \cdot P_t & \le \xi_t \le H_t \cdot P_t
\end{align*}
\end{theorem}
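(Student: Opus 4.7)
The plan is to reduce the multi-period claim to a one-step super-replication lemma and then iterate. The one-step lemma I would aim for reads: given $\ff_{t-1}$-measurable $p \in \RR^n$, $\ff_t$-measurable $P \in \RR^n$, and a pair of $(\ff_{t-1}, \ff_t)$-measurable random variables $(\xi_{t-1}, \xi_t)$ such that $\EE(\xi_t Y_t \mid \ff_{t-1}) \le \xi_{t-1} Y_{t-1}$ for every one-step local martingale deflator $(Y_{t-1}, Y_t)$ of $(p, P)$ for which the conditional expectation is well defined, there exists an $\ff_{t-1}$-measurable $h \in \RR^n$ with $h \cdot p \le \xi_{t-1}$ and $h \cdot P \ge \xi_t$ almost surely.

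Granted this lemma, the theorem follows by forward induction. Applying the lemma at $t = 0$ produces $H_1$ with $H_1 \cdot P_0 \le \xi_0$ and $H_1 \cdot P_1 \ge \xi_1$. Inductively, given $H_t$ with $H_t \cdot P_t \ge \xi_t$, apply the lemma with $p$ replaced by $P_t$, $\xi_{t-1}$ replaced by $\xi_t$, and target $\xi_{t+1}$ to obtain $H_{t+1}$ predictable with $H_{t+1} \cdot P_t \le \xi_t$ and $H_{t+1} \cdot P_{t+1} \ge \xi_{t+1}$; combined, this gives $H_{t+1} \cdot P_t \le \xi_t \le H_t \cdot P_t$ as required, and the self-financing inequality is automatic. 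The one-step hypothesis needed at each step follows from the global hypothesis: any bounded one-step local martingale deflator on $\{t-1, t\}$ extends to a martingale deflator on the whole time axis by splicing with a deflator provided by Theorem \ref{th:1FTAP}, and Theorem \ref{th:1FTAP}(4) allows us to choose the extension so that $\xi Y$ is integrable, so the global supermartingale property of $\xi Y$ specialises to the desired conditional inequality.

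For the one-step lemma itself, I would use the utility-maximisation machinery of Theorem \ref{th:util}. Augment the one-period market with an extra asset priced $\xi_{t-1}$ now and paying $\xi_t$ next, subject to a short-selling constraint so that the agent may hold only non-positive positions in it. The failure of super-replication together with the existence of a martingale deflator for the original market implies that the augmented constrained market is itself arbitrage-free, so the same techniques that drive the proof of (1) $\Rightarrow$ (2) of Theorem \ref{th:1FTAP} produce an optimiser of an expected utility of consumption with a non-zero Lagrange multiplier attached to the constraint. The rescaled marginal utility of optimal consumption is then a one-step local martingale deflator $Y$ of $(p, P)$ satisfying $\EE(\xi_t Y_t \mid \ff_{t-1}) > \xi_{t-1} Y_{t-1}$, contradicting the hypothesis and proving the lemma. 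The main obstacle is exactly this one-step lemma: passing from the dual supermartingale inequality to a primal super-replicating vector in a num\'eraire-free setting, and then doing so with an $\ff_{t-1}$-measurable selection in the spirit of the paper's preference for hands-on arguments over general measurable selection theorems.
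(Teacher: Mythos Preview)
Your reduction to a one-step lemma and its inductive iteration match the paper exactly, and your splicing argument (using Theorem \ref{th:1FTAP}(4) to extend a bounded one-step deflator to a global one with $\xi Y$ integrable) is correct and in fact a bit more explicit than the paper, which simply begins its proof with the one-step hypothesis.

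The one-step lemma, however, is handled quite differently. You propose a proof by contradiction: augment the market with the asset $(\xi_{t-1},\xi_t)$, constrain the position to $\alpha\le 0$, and assert that if super-replication fails then the constrained optimiser carries a non-zero Lagrange multiplier, yielding a deflator $Z$ with $\EE(\xi_t Z\mid\ff_{t-1})>\xi_{t-1}$. This last assertion is the gap. With the objective $F(h,\alpha)=e^{h\cdot P_{t-1}+\alpha\xi_{t-1}}+\EE[e^{-(h\cdot P_t+\alpha\xi_t)}\hat\zeta\mid\ff_{t-1}]$, the KKT conditions at the constrained optimum $(h^*,\alpha^*)$ give only $\partial_\alpha F(h^*,0)\le 0$, i.e.\ $\EE(\xi_t Z\mid\ff_{t-1})\ge\xi_{t-1}$ for the \emph{single} deflator $Z$ produced by $h^*$. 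Nothing rules out equality: if $\partial_\alpha F(h^*,0)=0$ (so $\lambda=0$), the unconstrained augmented problem has a minimiser and you recover merely one deflator with $\EE(\xi_t Z\mid\ff_{t-1})=\xi_{t-1}$, which is perfectly consistent with the supermartingale hypothesis and yields no contradiction. Turning the weak inequality into a strict one would require knowing that super-replication from $\xi_{t-1}+\varepsilon$ also fails for small $\varepsilon>0$, i.e.\ closedness of the super-replication set --- essentially the result you are trying to prove.

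The paper avoids this by arguing directly rather than by contradiction. It fixes a one-parameter family
\[
F_\gamma(h)=e^{-\gamma(\xi_{t-1}-h\cdot P_{t-1})}+\EE\big[e^{-\gamma(h\cdot P_t-\xi_t)}\zeta\,\big|\,\ff_{t-1}\big],
\]
finds for each $\gamma$ the minimiser $H_\gamma$ (exactly as in the proof of (1)$\Rightarrow$(5)), and reads off from the first-order condition a one-step deflator $Z_\gamma$. The supermartingale hypothesis applied to \emph{each} $Z_\gamma$ shows that $\gamma\mapsto F_\gamma(H_\gamma)$ is nonincreasing, hence bounded. Sending $\gamma\to\infty$ (large risk aversion) and extracting a measurable subsequential limit $H^*$ via the measurable Bolzano--Weierstrass lemma then forces $H^*\cdot P_{t-1}\le\xi_{t-1}$ and $H^*\cdot P_t\ge\xi_t$ directly. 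The key idea you are missing is this limiting procedure: the hypothesis is used not once to derive a contradiction, but repeatedly (for every $\gamma$) to obtain a uniform bound that survives the passage to the super-replicating limit.
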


The proof is deferred to section \ref{se:proofs}.

Note that given a non-negative $\ff_T$-measurable random variable $X_T$, we can
find the smallest process $(X_t)_{0 \le t \le T}$ such that $X Y$ is a supermartingale for all $Y$ by
\begin{align*} 
X_t &= \mathrm{ess \ sup}\left\{  \frac{1}{Y_t} \EE(X_T Y_T  | \ff_t ): Y \mbox{ a martingale deflator such that }
X_T Y_T \mbox{ is integrable } \right\} 
\end{align*}
Theorem \ref{th:superrep} says that there is a strategy $H$ such that
$H_T \cdot P_T \ge X_T$ almost surely, such that $H_1 \cdot P_0 \le X_0$.  
 In other words, $X_0$ bounds the initial cost of super-replicating the 
contingent claim with payout $\xi_T$.

\section{Pure-investment strategies}\label{se:pure}
In constrast the investment-consumption strategies
studied above, we now introduce the notion of a pure-investment strategy.

\begin{definition}
A strategy $H$ is called a  pure investment strategy if 
$$
  (H_t - H_{t+1}) \cdot P_{t} = 0 \mbox{ almost surely for all } t \ge 1.
$$
For pure-investment strategies, we will make the convention that $H_0 = H_1$.
\end{definition}

\subsection{Replication and the second fundamental theorem}

We are already prepared to characterise
contingent claims which can be attained by pure investment.  Again, the result is not
especially new, but it is interesting to see that it holds without the assumption
of the existence of  a num\'eraire.

\begin{theorem}\label{th:rep}    Suppose there exists at least one martingale deflator. 
Let $\xi$ be an adapted process such that $\xi Y$ is a martingale for all martingale deflators $Y$ 
such that $\xi Y$ is an integrable process. Then 
there exists a pure-investment strategy $H$ such that $H_t \cdot P_t = \xi_t$ almost surely for all $t \ge 0$.

In particular, if $\xi_T$ is an $\ff_T$-measurable random variable
such that there is a constant $\xi_0$ such that $\EE( \xi_T Y_T) = \xi_0 Y_0$ for 
all martingale deflators $Y$
for which $ \xi_T Y_T$ is integrable, then there is a pure-investment strategy $H$ such that $H_T \cdot P_T =\xi_T$.
\end{theorem}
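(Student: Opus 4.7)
The plan is to apply Theorem \ref{th:superrep} to produce a super-replicating strategy for $\xi$, and then to exploit the full martingale property (rather than the weaker supermartingale property) of $\xi Y$ to collapse the super-replication down to an exact, consumption-free replication using Proposition \ref{th:mart}.

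For the first assertion, since $\xi Y$ is in particular a supermartingale for every deflator $Y$ with $\xi Y$ integrable, Theorem \ref{th:superrep} yields an investment-consumption strategy $H$ with $H_1 \cdot P_0 \le \xi_0$ and $H_{t+1}\cdot P_t \le \xi_t \le H_t \cdot P_t$. Write $C_0 = \xi_0 - H_1 \cdot P_0 \ge 0$ and $C_t = (H_t - H_{t+1})\cdot P_t \ge 0$ for $t \ge 1$ for the super-replication slack. For any fixed horizon $T$, use condition (4) of Theorem \ref{th:1FTAP} to choose a deflator $Y$ for which $\xi_t Y_t$ is integrable on $[0,T]$, and apply Proposition \ref{th:mart} with $X_0 = \xi_0$. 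The process
\[
\tilde{M}_t = (H_t \cdot P_t - \xi_t) Y_t + \sum_{s=0}^{t-1} C_s Y_s
\]
is then a local martingale, being the difference of the local martingale of Proposition \ref{th:mart} and the martingale $\xi Y$; it vanishes at $t=0$ and is non-negative by the super-replication inequalities together with $Y > 0$. A non-negative local martingale is a supermartingale by Fatou's lemma, so $\EE(\tilde{M}_t) \le 0$, and combined with $\tilde{M}_t \ge 0$ this forces $\tilde{M}_t = 0$ almost surely. Positivity of $Y$ then gives $H_t \cdot P_t = \xi_t$ and $C_s = 0$ for every $s < t$. Varying $T$ over the positive integers and taking a countable union of null sets yields the desired pure-investment replication on the full time line.

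For the second assertion, I would apply Theorem \ref{th:superrep} to the minimal super-replication process $\xi^{\sup}_t = \mathrm{ess\,sup}\{Y_t^{-1}\EE(\xi_T Y_T \mid \ff_t): Y \text{ a deflator with } \xi_T Y_T \text{ integrable}\}$, whose initial value is $\xi_0$ by hypothesis. This yields a strategy $H$ with $H_T \cdot P_T \ge \xi_T$. Extending by $H_{T+1} = 0$ and using the second conclusion of Proposition \ref{th:mart}, the expected total consumption is $\xi_0 Y_0 = \EE(\xi_T Y_T)$. Combined with the pointwise inequality $\sum_{s=0}^T C_s Y_s \ge C_T Y_T \ge \xi_T Y_T$, this forces equality almost surely, hence $C_s = 0$ for $s < T$ and $H_T \cdot P_T = \xi_T$. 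The main obstacle in both parts is securing enough integrability to promote the local martingale identities into genuine expectation identities; this is exactly the role played by condition (4) of Theorem \ref{th:1FTAP}, which supplies arbitrarily small martingale deflators tailored to the integrability of $\xi$.
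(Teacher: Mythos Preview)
Your first assertion follows the paper's proof almost verbatim: both of you apply Theorem \ref{th:superrep}, form the non-negative local martingale $M - \xi Y$ (your $\tilde M$), and use its non-negativity together with $\tilde M_0 = 0$ to force all the slack to vanish. The only cosmetic difference is that you invoke Fatou's lemma to get a supermartingale where the paper cites Proposition \ref{th:non-neg} to get a true martingale; either suffices.

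For the second assertion, however, you take a different route from the paper and there is a small gap. The paper does not work directly with $\xi^{\sup}$; instead it introduces the spread
\[
D_t \;=\; \underset{Y}{\mathrm{ess\,sup}}\ \tfrac{1}{Y_t}\EE(\xi_T Y_T \mid \ff_t)
\;-\; \underset{Y}{\mathrm{ess\,inf}}\ \tfrac{1}{Y_t}\EE(\xi_T Y_T \mid \ff_t),
\]
observes that $DY$ is a non-negative submartingale with $D_0 = D_T = 0$, concludes $D \equiv 0$, and then simply applies the first assertion to the resulting process $\xi$. Your direct approach---super-replicate $\xi^{\sup}$, set $H_{T+1}=0$, and collapse via the expectation identity---is morally fine, but your appeal to the second conclusion of Proposition \ref{th:mart} is not licensed as stated: that conclusion requires the extended strategy to remain an investment-consumption strategy, i.e.\ $C_T = H_T\cdot P_T \ge 0$, whereas you only know $H_T\cdot P_T \ge \xi_T$, and $\xi_T$ need not be non-negative. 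The repair is exactly the trick you already used in the first part: subtract the true martingale $N_t = \EE(\xi_T Y_T \mid \ff_t)$ from $M$, observe that $M_T - N_T \ge 0$, and apply Proposition \ref{th:non-neg} to $M-N$ rather than to $M$ itself. Once you do that your argument goes through, and it has the virtue of avoiding the ess\,sup $=$ ess\,inf detour entirely.
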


\begin{proof}
If $\xi Y$ is a martingale for all suitably integrable
 martingale deflators $Y$, then 
by Theorem  \ref{th:superrep}, there exists an investment-consumption
strategy $H$ such that $\xi_t \le H_t \cdot P_t$ almost
surely for all $t \ge 1$ and that $\xi_0 \ge H_1 \cdot P_0$.
Hence, by setting $X_0 = \xi_0$ and  $X_t = H_t \cdot P_t$ for $t \ge 1$,
we see that $C_t = X_t - H_{t+1} \cdot P_t$ is non-negative for all $t \ge 0$.
 Now fix one such $Y$ and let
$$
M_t  =  X_t Y_t + \sum_{s=0}^{t-1} C_s Y_s.
$$ 
  Note that
$M$ is a local martingale by Proposition
\ref{th:mart}.  Hence the process $M- \xi Y$ is a local
martingale by linearity.  But note that
since $X_t \ge \xi_t$ we have that 
$$
M_t - \xi_t Y_t   \ge 0
$$
and hence $M- \xi Y$ is a true martingale by
Proposition \ref{th:non-neg}.  However, 
$$
\EE(M_t - \xi_t Y_t) = (X_0- \xi_0) Y_0 = 0 
$$
and hence $M_t = \xi_t Y_t$ for all $t \ge 0$ which
implies $X_t = \xi_t$ for all $t \ge 0$ as desired.  

Now suppose  $\xi_T$ is a given $\ff_T$-measurable random variable
such  $\EE( \xi_T Y_T) = \xi_0 Y_0$ for all martingale deflators $Y$
and a given real number $\xi_0$.  
Let 
$$
D_t =  \underset{Y}{\mathrm{ess \ sup}} \ \frac{1}{Y_t} \EE(\xi_T Y_T  | \ff_t )
- \underset{Y}{\mathrm{ess \ inf}} \ \frac{1}{Y_t} \EE(\xi_T Y_T  | \ff_t )
$$
where the essential supremum and infimum are over all suitably integrable
martingale deflators $Y$.  Note that $DY$ is submartingale for all $Y$,
but $D_T = 0 = D_0$.  Hence $D_t = 0$ for all $0 \le t \le T$ and so
we can set
$$
\xi_t = \mathrm{ess \ sup}_Y \frac{1}{Y_t} \EE(\xi_T Y_T  | \ff_t )
= \mathrm{ess \ inf}_Y \frac{1}{Y_t} \EE(\xi_T Y_T  | \ff_t )
$$
for $0 < t < T$ and apply the previous result.
\end{proof}

\begin{remark} Another proof of this theorem is given in \cite{T2010}
in the case where the market has a num\'eraire asset.
\end{remark}

We now recall a definition.
\begin{definition}  A market model is complete if for every $T > 0$
and $\ff_T$-measurable random variable $\xi_T$ there exist a 
pure-investment strategy $H$ such that
$$
H_T \cdot P_T = \xi_T.
$$
\end{definition}

In this framework we can state the seond fundamental theorem of asset pricing.
\begin{theorem}   Suppose the market has no arbitrage.  The market
is complete if and only if there is exactly one martingale deflator such that $Y_0 = 1$.
\end{theorem}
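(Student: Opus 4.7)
The plan is to prove the two implications separately. For the forward direction I use Proposition \ref{th:mart} to show that replication pins down the value of any martingale deflator on a generating family of events. For the reverse direction I combine Theorem \ref{th:rep} with condition (4) of Theorem \ref{th:1FTAP} to produce replicating pure-investment strategies for arbitrary $\ff_T$-measurable payoffs. Throughout, no-arbitrage guarantees via Theorem \ref{th:1FTAP} that at least one martingale deflator exists, so the uniqueness hypothesis is not vacuous.

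For the forward direction ($\Rightarrow$), suppose the market is complete and let $Y, Y'$ be two martingale deflators with $Y_0 = Y'_0 = 1$. Fix $t \ge 1$ and $A \in \ff_t$. By completeness there is a pure-investment strategy $H$ with $H_t \cdot P_t = \one_A$; extending by $H_{t+1} = 0$ produces an investment-consumption strategy in $\mathcal{H}_{X_0, t}$ with $X_0 = H_1 \cdot P_0$, whose consumption stream vanishes for $0 \le s \le t-1$ and satisfies $C_t = \one_A$. Applying Proposition \ref{th:mart} to $Y$ and to $Y'$ in turn yields
\[
\EE(\one_A Y_t) \; = \; H_1 \cdot P_0 \; = \; \EE(\one_A Y'_t),
\]
so $Y_t = Y'_t$ almost surely, and varying $t$ and $A$ establishes $Y = Y'$.

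For the reverse direction ($\Leftarrow$), first observe that since $cY$ is a martingale deflator for every $c > 0$, the uniqueness hypothesis forces every martingale deflator to be a positive multiple of the distinguished $Y$. Fix $T > 0$ and an $\ff_T$-measurable random variable $\xi_T$; to apply the ``in particular'' part of Theorem \ref{th:rep} I must secure integrability of $\xi_T Y_T$. I would invoke condition (4) of Theorem \ref{th:1FTAP} with the positive adapted process $\eta_s = 1$ for $s < T$ and $\eta_T = (1 + |\xi_T|)^{-1}$: this produces a deflator $\tilde Y$ with $\tilde Y_s \le \eta_s$ almost surely, and uniqueness forces $\tilde Y = \tilde Y_0 Y$, whence $\tilde Y_0 (1 + |\xi_T|) Y_T \le 1$, so $\xi_T Y_T$ is bounded. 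Setting $\xi_0 := \EE(\xi_T Y_T)$, for every deflator $\bar Y = cY$ one has $\EE(\xi_T \bar Y_T) = c \xi_0 = \bar Y_0 \xi_0$ by linearity, and Theorem \ref{th:rep} then delivers the desired replicating pure-investment strategy.

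The main obstacle is the integrability step in the reverse direction: for unbounded $\xi_T$ there is no a priori reason that $\xi_T Y_T$ should be integrable, and without this Theorem \ref{th:rep} cannot even be invoked. It is precisely here that the sharpened form (4) of the first fundamental theorem is essential. Uniqueness collapses the set of martingale deflators to a single ray, and (4) then forces $Y_T$ to decay fast enough relative to $\xi_T$ to guarantee the needed boundedness. The forward direction is comparatively routine once one recognises that replication plus Proposition \ref{th:mart} rigidly determines the value $\EE(\one_A Y_t)$.
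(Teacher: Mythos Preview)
Your proof is correct. For the reverse direction (uniqueness $\Rightarrow$ completeness) you follow the paper's route exactly: invoke condition (4) of Theorem \ref{th:1FTAP} to secure integrability of $\xi_T Y_T$, note that uniqueness collapses all deflators to positive multiples of $Y$, and apply Theorem \ref{th:rep}. For the forward direction you diverge from the paper. You replicate indicators $\one_A$ for $A \in \ff_t$ and use Proposition \ref{th:mart} (legitimate because $\one_A \ge 0$, so Proposition \ref{th:non-neg} upgrades the local martingale to a true one) to get $\EE(\one_A Y_t) = \EE(\one_A Y'_t)$ for every such $A$, pinning down $Y_t$ directly; integrability of $Y_t$ drops out by taking $A=\Omega$. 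The paper instead replicates the single payoff $(Y_T - Y'_T)/(Y_T+Y'_T)^2$, whose product with either $Y_T$ or $Y'_T$ is bounded by $1$, and subtracts the two resulting identities to obtain $\EE[(Y_T-Y'_T)^2/(Y_T+Y'_T)^2] = 0$. Your argument is more elementary and transparent; the paper's is a one-shot device that handles integrability through the clever denominator but requires spotting the right test payoff. One small point both proofs elide: condition (4) produces a deflator only on $[0,T]$, whereas the uniqueness hypothesis concerns infinite-horizon deflators; the fix is to splice the finite-horizon deflator $\tilde Y$ with $Y$ beyond time $T$, which forces $\tilde Y_t = \tilde Y_0 Y_t$ on $[0,T]$ as you claim.
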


\begin{proof}
 Suppose that there is a unique martingale deflator such that $Y_0 = 1$.
Let $\xi_T$ be any $\ff_T$-measurable random variable.  By implication (3) of Theorem \ref{th:1FTAP}
we can suppose that $\xi_T Y_T$ is integrable.  Let
$$
\xi_t = \frac{1}{Y_t} \EE( Y_T \xi_T | \ff_t).
$$
Note that $\xi Y$ is a martingale.  Hence $\xi Y'$ is
also a martingale for any martingale deflator $Y'$ since $Y'=Y'_0 Y$.
  By Theorem \ref{th:rep} there exists a pure-investment
strategy such that $H \cdot P = \xi$.  Hence the market is complete.

Conversely, suppose that the market is complete.  Let $Y$ and $Y'$ be
martingale deflators such that $Y_0=Y_0'=1$.  Fix a $T> 0$.
By completeness there exists
a pure-investment strategy $H$ such that
$$
H_T \cdot  P_T =( Y_T - Y_T' ) Z 
$$
where $Z = \frac{1}{(Y_T + Y_T')^2}$.  (The factor $Z$ will be used 
to insure integrability later.)

Since $H \cdot P Y$ is a local martingale by Proposition \ref{th:mart}
with non-negative value at time $T$, it is a true martingale by
Proposition \ref{th:non-neg}. 
In particular,
$$
H_0 \cdot P_0 =  \EE [( Y_T - Y_T' ) Z  Y_T ].
$$
By the same argument with $Y'$ we have
$$
H_0 \cdot P_0 =  \EE [( Y_T - Y_T' ) Z  Y_T' ].
$$
Subtracting yields
$$
\EE[ ( Y_T - Y_T' )^2 Z   ]  = 0
$$
so by the pigeon-hole principle we have $\PP( Y_T = Y_T') = 1$ as desired.
\end{proof}

In discrete-time models, complete markets
have even more structure:

\begin{theorem}  Suppose the market model with $n$ assets is complete.
For each $t \ge 0$, every $\ff_t$-measurable partition of the sample space $\Omega$
has  no more than $n^t$  events of positive probability.  In particular, the $n$-dimensional random vector $P_t$ takes  
values in a set of at most $n^t$ elements.
\end{theorem}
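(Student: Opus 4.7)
The plan is to argue by induction on $t$, showing the stronger assertion that $\ff_t$ is purely atomic with at most $n^t$ atoms of positive probability. The base case $t = 0$ is immediate from the paper's standing convention that $\ff_0$ is $\PP$-trivial, leaving the single atom $\Omega$. For the inductive step, I will assume $\ff_{t-1}$ has atoms $A_1, \dots, A_m$ of positive probability with $m \le n^{t-1}$, and reduce the full bound to the one-step claim that $\ff_t|_{A_j}$ has at most $n$ positive-probability atoms for each $j$; summing then gives the total at most $mn \le n^t$.

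The heart of the argument is this one-step bound. Fix $A = A_j$ and an arbitrary $\ff_t$-measurable random variable $\xi$. Completeness produces a pure-investment strategy $H$ with $H_t \cdot P_t = \xi$. Predictability makes $H_t$ $\ff_{t-1}$-measurable and therefore constant on the atom $A$; write this common value as $h \in \RR^n$. Then $\xi|_A = h \cdot P_t|_A$, so every $\ff_t|_A$-measurable function on $A$ lies in the image of the linear map $\RR^n \to L^0(A)$ sending $h \mapsto h \cdot P_t|_A$. This image is a subspace of dimension at most $n$, which forces $L^0(A, \ff_t|_A)$ to have dimension at most $n$; equivalently, $\ff_t|_A$ is atomic with at most $n$ positive-probability atoms. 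The final assertion about $P_t$ then follows, because $P_t$ is $\ff_t$-measurable and hence constant on each atom of $\ff_t$.

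I do not anticipate a genuine obstacle. The one point worth pausing over is whether the self-financing constraint $H_t \cdot P_{t-1} = H_{t-1} \cdot P_{t-1}$ could reduce the effective image of $h \mapsto h \cdot P_t|_A$ below its natural $n$-dimensional range. It does not, because completeness requires only the existence of \emph{some} replicating $H$ for each $\xi$: once such an $H$ is fixed, one simply reads off $h = H_t|_A \in \RR^n$, so $\xi|_A$ lies in the image of the unconstrained linear map, which is all that is needed.
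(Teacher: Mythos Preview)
Your proof is correct and follows essentially the same route as the paper: both argue by induction on $t$, using that a replicating portfolio $H_t$ is $\ff_{t-1}$-measurable and hence constant on each $\ff_{t-1}$-atom, which bounds the space of attainable $\ff_t$-measurable random variables by an $n$-dimensional image on each atom (the paper packages this globally as $\mathrm{span}\{P_t^i \one_{A_j}: i,j\}$, while you localize to one atom at a time). One small caveat: the triviality of $\ff_0$ that you invoke for the base case is not actually stated as a standing convention in the paper, though the paper's own induction needs it just as yours does.
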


\begin{proof} Fix $t \ge 1$.  Let $A_1, \ldots, A_p$ be
a maximal partition of $\Omega$ into disjoint $\ff_{t-1}$-measurable events with $\PP(A_i) > 0$ for all $i$.
Similarly, let $B_1, \ldots, B_q$ be a   partition into non-null $\ff_t$-measurable events.  We
will show that $q \le n p$.   The result will follow from induction.

First note that the set $\{ \one_{B_1}, \ldots, \one_{B_q} \}$ of random variables
is  linearly independent,
and in particular, 
the dimension of its span is exactly $q$.
Assuming that the market is complete, each of the $\one_{B_i}$ can be replicated by a pure-investment strategy.  Hence
$$
\mathrm{span}\{ \one_{B_1}, \ldots, \one_{B_q} \}   \subseteq \{ H\cdot P_t: H \mbox{ is $\ff_{t-1}$-meas. } \}.
$$
We need only show that the dimension of the space on the right is at most $n p$.

Now note that if a random vector $H$ is $\ff_{t-1}$-measurable, then
it takes exactly one value on each of the $A_j$'s for a total of at most $p$ values
$h_1, \ldots, h_p$.  Hence
\begin{align*}
\{ H\cdot P_t: H \mbox{ is $\ff_{t-1}$-meas. } \} & = 
\{ h_1 \cdot P_t \one_{A_1} + \ldots + h_p \cdot P_t \one_{A_p}: h_1, \ldots, h_p \in \RR^n \} \\
&= \mathrm{span}\{ P_t^i\one_{A_j}: 1 \le i \le n, 1 \le j \le p \},
\end{align*}
concluding the argument.
\end{proof}

The final result shows that in a complete market, we automatically
have a num\'eraire asset, so in this special case, the above generalisation
of the usual arbitrage theory
is not needed.  First recall some definitions:
\begin{definition}
A num\'eraire strategy is a pure-investment strategy $\eta$ such that
$$
 \eta_t \cdot P_t > 0 \mbox{ almost surely   for all } t \ge 0.
$$
\end{definition}

\begin{definition}
A risk-free strategy is a pure-investment strategy $\eta$ such that
$$
 \eta_t \cdot P_t  \mbox{ is $\ff_{t-1}$-measurable for all } t \ge 1.
$$
\end{definition}

\begin{theorem}   Suppose the market model is complete and is free of
investment-consumption arbitrage.  Then there
exists a risk-free num\'eraire strategy.
\end{theorem}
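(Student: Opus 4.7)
The plan is to construct a strictly positive, predictable value process $V$ such that $VY$ is a martingale for the canonical martingale deflator $Y$, and then to invoke Theorem \ref{th:rep} to realise $V$ as the value process of a pure-investment strategy. Strict positivity of $V_t$ will give the num\'eraire property and $\ff_{t-1}$-measurability of $V_t$ will give the risk-free property.

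First I apply Theorem \ref{th:1FTAP} to obtain a martingale deflator, and normalise it so that $Y_0 = 1$; by the second fundamental theorem, this $Y$ is uniquely determined. Condition (4) of Theorem \ref{th:1FTAP}, applied on any finite horizon $[0,T]$, produces a martingale deflator bounded by any chosen positive adapted process, and uniqueness forces this bounded MD to coincide with $Y$ up to a positive $\ff_0$-measurable factor. Consequently every $Y_t$ is integrable, and $\EE(Y_{t+1}\,|\,\ff_t)$ is a finite, strictly positive random variable.

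Next I define $V_0 = 1$ and, inductively,
\[
V_{t+1} \;=\; \frac{V_t \, Y_t}{\EE(Y_{t+1}\,|\,\ff_t)}.
\]
By construction $V_{t+1}$ is $\ff_t$-measurable and strictly positive, and a direct computation gives $\EE(V_{t+1} Y_{t+1}\,|\,\ff_t) = V_t Y_t$, so $VY$ is a martingale with $V_0 Y_0 = 1$; in particular every $V_t Y_t$ is integrable. Since every martingale deflator is a positive $\ff_0$-measurable scalar multiple of $Y$, the hypothesis of Theorem \ref{th:rep} holds for the adapted process $V$, and that theorem furnishes a pure-investment strategy $\eta$ with $\eta_t \cdot P_t = V_t$ almost surely for all $t \ge 0$. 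Strict positivity of $V$ makes $\eta$ a num\'eraire, and the $\ff_{t-1}$-measurability of $V_t$ for $t \ge 1$ makes $\eta$ risk-free.

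The main technical obstacle is the integrability of $Y_t$, which is needed both to make sense of $\EE(Y_{t+1}\,|\,\ff_t)$ and to invoke Theorem \ref{th:rep}. This is precisely the step where completeness is used, via the interplay between uniqueness of the normalised martingale deflator (the second fundamental theorem) and the existence of arbitrarily dominated martingale deflators on finite horizons (condition (4) of the first fundamental theorem); once this integrability is in hand, the remainder of the argument is a routine verification.
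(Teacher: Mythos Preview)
Your argument is essentially correct and reaches the same destination as the paper, but by a genuinely different route.

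The paper works on the primal side: by completeness, each zero-coupon bond (the constant payoff $1$ at time $t$) is replicable by some pure-investment strategy $H^t$, and by no-arbitrage its price $B^t_s = H^t_s\cdot P_s$ is strictly positive. The paper then sets $\beta_t = \prod_{s=1}^t (B_{s-1}^s)^{-1}$ and verifies by hand that the concrete strategy $\eta_t = \beta_t H^t_t$ (rolling over one-period bonds) is a self-financing pure-investment strategy with value $\beta_t$. You instead work on the dual side, building $V$ from the unique normalised deflator $Y$ and then invoking the abstract replication Theorem~\ref{th:rep} to manufacture the strategy. In fact your $V$ coincides with the paper's $\beta$: since $H^t\cdot P\,Y$ is a true martingale with value $Y_t$ at time $t$, one has $B_{t-1}^t Y_{t-1} = \EE(Y_t\mid\ff_{t-1})$, and hence $\beta_t/\beta_{t-1} = Y_{t-1}/\EE(Y_t\mid\ff_{t-1}) = V_t/V_{t-1}$. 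The paper's argument is more elementary and explicit --- it uses only the definition of completeness and never needs Theorem~\ref{th:rep} or the second fundamental theorem --- while yours is more structural and makes transparent how the risk-free value arises from the deflator.

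One small gap: the assertion that ``every $Y_t$ is integrable'' does not quite follow from condition~(4) plus uniqueness unless $\ff_0$ is trivial, since the bounded deflator furnished by condition~(4) is only a positive $\ff_0$-\emph{measurable} multiple of $Y$. What you do obtain is $Y_t \le 1/\tilde Y_0$ for some strictly positive $\ff_0$-measurable $\tilde Y_0$, which gives $\EE(Y_{t+1}\mid\ff_t)\le 1/\tilde Y_0<\infty$ almost surely --- and that is all the recursion for $V$ actually requires. The integrability of $V_tY_t$ then follows separately, since $VY$ is a nonnegative process satisfying the generalised martingale identity and is therefore a true martingale by Proposition~\ref{th:non-neg}.
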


\begin{proof}
By completeness, zero-coupon bonds can be attained.  That is,
for each $T > 0$ there exists a pure-investment strategy $H^T$ such that $H^T_T  \cdot P_T = 1$ a.s.
By no-arbitrage, the bonds are num\'eraires: setting $B_t^T = H^T_t \cdot P_t$
we have $B_t^T > 0$ almost surely for all $0 \le t \le T$.  
Now define the money market account process $\beta$ by
$$
\beta_t = \prod_{s=1}^t (1 + r_s)
$$
where
$$
r_t = \frac{1}{B_{t-1}^{t}} - 1.
$$
Note that $\beta$ is predictable and strictly positive.  Furthermore, let $\eta_t = \beta_t H^t_t$.
This portfolio corresponds to holding the $\beta_t$ units of the bond with maturity $t$ during the 
period $(t-1,t]$ just before its maturity. 

First note that $\beta_t = \eta_t \cdot P_t$  since $B_t^t= H_t^t \cdot P_t = 1$.  
Finally note that the predictable process $\eta$ is a self-financing pure-investment strategy since
\begin{align*}
\eta_{t+1} \cdot P_t &=  \beta_{t+1} H^{t+1}_{t+1} \cdot P_t \\
 &=  \beta_{t+1} H^{t+1}_{t} \cdot P_t  \mbox{ (since $H^{t+1}$ is pure-invest.) }  \\
& = \beta_{t+1} B^{t+1}_t  \\
& = \beta_t
\end{align*}
as desired.
\end{proof}

\subsection{Num\'eraires and equivalent martingale measures}\label{se:num}
In this section, we recall the  concepts of a num\'eraire
  and an equivalent martingale measures.  The primary purpose of this section 
 is to reconcile concepts and terminology used by other authors.

We now recall the definition of an equivalent martingale measure.   
We will now see that the notions of martingale deflator and equivalent martingale measure  are essentially the 
same concept once a num\'eraire is specified. 

\begin{definition} 
Suppose there exists a num\'eraire
strategy   $\eta$ with corresponding wealth $ \eta \cdot P  = N$.  
An equivalent martingale measure relative 
to this num\'eraire is any probability measure 
$\QQ$ equivalent to $\PP$ such that the discounted asset prices $P/N$ 
 are  martingales under $\QQ$.
\end{definition}

\begin{proposition}
Let $Y$ be a martingale deflator for the model, and let 
$\eta$ be a num\'eraire strategy with value $\eta \cdot P  = N$.
Fix a time horizon $T > 0$, and define a new measure $\QQ$ by the density
$$
\frac{d\QQ}{d\PP} = \frac{N_T Y_T  }{N_0 Y_0}.
$$
Then $\QQ$ is an equivalent martingale measure  for the finite-horizon model $(P_t)_{0 \le t \le T}$.  

Conversely, let $\QQ$ be an equivalent martingale measure for $(P_t)_{0 \le t \le T}$.   Let
$$
Y_t = \frac{1}{N_t} \EE^{\PP} \left( \frac{d\QQ}{d\PP} | \ff_t \right).
$$
Then $Y$ is a martingale deflator.
\end{proposition}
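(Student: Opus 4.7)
The plan is to prove the two directions separately, in each case using Proposition \ref{th:mart} to control the wealth process $NY$ of the num\'eraire strategy together with Bayes' rule for conditional expectations with respect to the change of measure.

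For the first direction, I first need to check that the alleged density $\frac{d\QQ}{d\PP} = N_T Y_T / (N_0 Y_0)$ actually defines a probability measure equivalent to $\PP$. Positivity is immediate from $N_t > 0$ and $Y_t > 0$. For the normalisation, apply Proposition \ref{th:mart} to the pure-investment strategy $\eta$ (so the consumption stream is identically zero) to conclude that $NY = (\eta \cdot P) Y$ is a local martingale; since $NY > 0$, Proposition \ref{th:non-neg} upgrades it to a true martingale on $[0,T]$, so $\EE^{\PP}[N_T Y_T] = N_0 Y_0$. Consequently the density process is $Z_t = \EE^{\PP}(\frac{d\QQ}{d\PP} \mid \ff_t) = N_t Y_t / (N_0 Y_0)$. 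To verify that $P/N$ is a $\QQ$-martingale on $[0,T]$, I apply Bayes' formula:
\begin{align*}
\EE^{\QQ}\!\left( \frac{P_{t+1}}{N_{t+1}} \, \Big| \, \ff_t \right)
= \frac{1}{Z_t}\, \EE^{\PP}\!\left( \frac{P_{t+1}}{N_{t+1}} Z_{t+1} \, \Big| \, \ff_t \right)
= \frac{1}{N_t Y_t}\, \EE^{\PP}( P_{t+1} Y_{t+1} \mid \ff_t )
= \frac{P_t Y_t}{N_t Y_t} = \frac{P_t}{N_t},
\end{align*}
where the penultimate equality uses that $PY$ is a $\PP$-martingale.

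For the converse, given an equivalent martingale measure $\QQ$ with density process $Z_t = \EE^{\PP}(\frac{d\QQ}{d\PP} \mid \ff_t)$, set $Y_t = Z_t / N_t$. Then $Y_t > 0$ because $Z_t > 0$ (equivalence) and $N_t > 0$ (num\'eraire property). To show $PY$ is a $\PP$-martingale, use Bayes' formula in the reverse direction: for any $\ff_{t+1}$-measurable integrable random variable $X$, $\EE^{\PP}(X Z_{t+1} \mid \ff_t) = Z_t\, \EE^{\QQ}(X \mid \ff_t)$. Applying this with $X = P_{t+1}/N_{t+1}$,
\begin{align*}
\EE^{\PP}(P_{t+1} Y_{t+1} \mid \ff_t)
= \EE^{\PP}\!\left( \frac{P_{t+1}}{N_{t+1}} Z_{t+1} \, \Big| \, \ff_t \right)
= Z_t\, \EE^{\QQ}\!\left( \frac{P_{t+1}}{N_{t+1}} \, \Big| \, \ff_t \right)
= Z_t \cdot \frac{P_t}{N_t} = P_t Y_t,
\end{align*}
using the $\QQ$-martingale property of $P/N$.

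The only non-trivial step is establishing that $NY$ is a genuine $\PP$-martingale on $[0,T]$, which is what makes $\frac{d\QQ}{d\PP}$ normalised; here one needs both that pure-investment strategies preserve the local martingale property when multiplied by a deflator (Proposition \ref{th:mart}) and that strict positivity promotes a local martingale to a martingale over a finite horizon (Proposition \ref{th:non-neg}). Once the density and its density process are in hand, everything else is a mechanical application of Bayes' rule.
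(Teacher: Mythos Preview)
Your proof is correct and follows essentially the same approach as the paper: both directions reduce to Bayes' formula once one knows that $NY$ is a true $\PP$-martingale. The paper conditions on $\ff_t$ and computes $\EE^{\QQ}(P_T/N_T \mid \ff_t)$ directly, whereas you verify the one-step martingale property; you are also slightly more explicit than the paper in citing Proposition~\ref{th:non-neg} to upgrade $NY$ from a local martingale to a true martingale, but the substance is identical.
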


\begin{proof}  First we need to show that
the proposed density does in fact define an equivalent probability measure.  Since $\eta$ 
is a pure-investment strategy and the $N$ is positive by definition, the process $NY$ is a martingale by 
Proposition \ref{th:mart}.  In particular, the random variable $d\QQ/d\PP$ above is positive-valued.  Also, since 
$NY$ is a martingale, we have
$$
\EE^{\PP} ( N_T Y_T  ) = Y_0 N_0  \ \ \Rightarrow \EE^{\PP} \left( \frac{d\QQ}{d\PP}  \right) = 1.
$$

Now we will show that the discounted price process $P/N$ is a martingale under $\QQ$.  For $0 \le t \le T$ 
 Bayes's formula  yields
\begin{align*}
\EE^{\QQ}\left( \frac{P_T}{N_T} | \ff_t \right) &= \frac{\EE^{\PP} \left( \frac{d \QQ}{d \PP} \frac{P_T}{N_T} | \ff_t \right)}
{\EE^{\PP} \left( \frac{d \QQ}{d \PP} | \ff_t \right)} \\
&= \frac{\EE^{\PP} \left( P_T N_T | \ff_t \right)}
{\EE^{\PP} \left( N_T Y_T  | \ff_t \right)} \\
& = \frac{ P_t}{N_t}
\end{align*}
since by the definition of martingale deflator,  both $PY$ and $NY$ are martingales.   

Now for the converse.   Let $\QQ$ be an equivalent martingale measure and $Y$ be defined by the formula.
\begin{align*}
P_t Y_t & = \EE^{\QQ}\left( \frac{P_T}{N_T} | \ff_t \right) \EE^{\PP} \left( \frac{d \QQ}{d \PP} | \ff_t \right) \\
& = \EE^{\PP} \left( Y_T P_T | \ff_t \right)  
\end{align*}
and hence $Y$ is a martingale deflator.
\end{proof}

\begin{corollary}  Consider a finite horizon market model with a num\'eraire.
There is  no pure investment arbitrage if and only if there exists an 
equivalent martingale measure
\end{corollary}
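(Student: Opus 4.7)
The plan is to reduce the claim to the combination of Theorem \ref{th:1FTAP} and the preceding proposition, which together already establish the equivalence ``no investment-consumption arbitrage $\Leftrightarrow$ there exists an equivalent martingale measure'' whenever a num\'eraire strategy exists. Hence the only substantive task is to check that, in this setting, the absence of pure-investment arbitrage is equivalent to the absence of investment-consumption arbitrage.

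One direction is more or less automatic: every pure-investment strategy with $H_{T+1}=0$ is a special case of an investment-consumption strategy (with $C_t=0$ for $t<T$ and $C_T=H_T\cdot P_T$), so the absence of investment-consumption arbitrage is formally the stronger condition. Equivalently, starting from an EMM one obtains a martingale deflator $Y$ via the preceding proposition, and Proposition \ref{th:mart} applied to a pure-investment strategy with $H_1\cdot P_0\le 0$ and $H_T\cdot P_T\ge 0$ gives $\EE[H_T\cdot P_T\,Y_T]=X_0 Y_0\le 0$, which forces $H_T\cdot P_T=0$ almost surely since $Y_T>0$.

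The converse is the heart of the argument and the main obstacle: given an investment-consumption arbitrage, I must construct a pure-investment arbitrage, using the num\'eraire as a vehicle to store up the running consumption. Let $\eta$ be a num\'eraire strategy with value $N=\eta\cdot P>0$, and let $H^{\mathrm{arb}}\in\mathcal{H}_{0,T}$ be an investment-consumption arbitrage with nonnegative consumption stream $C_t=(H^{\mathrm{arb}}_t-H^{\mathrm{arb}}_{t+1})\cdot P_t$ (with $H^{\mathrm{arb}}_0=0$). I set
\[
\tilde H_t = H^{\mathrm{arb}}_t + \alpha_t\,\eta_t,\qquad \alpha_t = \sum_{s=0}^{t-1}\frac{C_s}{N_s},
\]
which is predictable since $\alpha_t$ is $\ff_{t-1}$-measurable. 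The identity $\alpha_{t+1}-\alpha_t=C_t/N_t$ combined with the fact that $\eta$ is pure-investment gives $\tilde H_{t+1}\cdot P_t=\tilde H_t\cdot P_t$, so $\tilde H$ is itself pure-investment. Direct computation yields $\tilde H_1\cdot P_0=-C_0+(C_0/N_0)N_0=0$ and $\tilde H_T\cdot P_T=C_T+N_T\sum_{s=0}^{T-1}C_s/N_s\ge 0$, with strict inequality on the event $\{C_t>0\text{ for some }0\le t\le T\}$, which has positive probability by assumption on $H^{\mathrm{arb}}$. Hence $\tilde H$ is a pure-investment arbitrage, contradicting our hypothesis.

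With both directions of the arbitrage equivalence in hand, the corollary follows by chaining Theorem \ref{th:1FTAP} (which produces a martingale deflator $Y$ from the absence of investment-consumption arbitrage) with the preceding proposition (which upgrades $Y$ to the equivalent martingale measure $d\QQ/d\PP=N_TY_T/(N_0Y_0)$), and vice versa.
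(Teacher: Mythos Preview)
Your proof is correct and follows essentially the same approach as the paper. The key step---using the num\'eraire to store the running consumption and thereby convert an investment-consumption arbitrage into a pure-investment one---is exactly the construction the paper gives in the proposition of Section~\ref{se:abs-rel}; your $\tilde H_t = H^{\mathrm{arb}}_t + \alpha_t\,\eta_t$ differs from the paper's $K_t$ only in that you also absorb the time-$0$ consumption $C_0$ into the num\'eraire (making the initial cost exactly zero rather than merely nonpositive), which is a harmless cosmetic improvement.
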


\begin{remark}
Note that an equivalent martingale measure as defined here only makes sense in the
context of a market model with some fixed, finite time-horizon.   In general, even
if there is no arbitrage there does not exist an equivalent measure under which
the discounted market prices are martingales over an infinite horizon, since 
the martingale $NY$ may fail to be uniformly integrable.  This 
technicality can be resolved by invoking the notion of a locally equivalent measure.

However, notice that the fundamental theorem of asset pricing, when 
stated in terms of the price density, holds for all time-horizons simultaneously. 
  \end{remark}

\section{Other notions of arbitrage and bubbles in discrete time}\label{se:abs-rel}

We now reconsider the definition of arbitrage
as defined above.  Indeed, there are a number of, a priori distinct, notions of arbitrage
which appear naturally in financial modelling.  

It is difficult to give a mathematically precise definition of a price bubble
in a financial market model.  One possible definition is to say that there
exists a bubble if there exists a weak notion of arbitrage but not a stronger notion.
  We now elaborate on this point.

Suppose we are given both a set of admissible trading strategies and  a
collection of preference relations on this set.  
An absolute arbitrage is a strategy $H^{\mathrm{abs}}$ such that for any admissible $H$, the strategy
 $H+H^{\mathrm{abs}}$ is also admissible and $H+H^{\mathrm{abs}}$ is strictly preferred to $H$
for all preference relations.  
An absolute  arbitrage is scalable in the sense that for all $k \ge 0$
the strategies $H^k = H+ k H^{\mathrm{abs}}$ are feasible and 
$H^{k+1}$ is preferred to strategy $H^k$.

It is usually considered
desirable to consider models without this type of arbitrage. Indeed, 
if prices are derived from a competitive
equilibrium, then all agents are holding their optimal allocation.
However,   if the market admits an absolute arbitrage, then 
there does not exist an optimal strategy for any agent: given any strategy, the 
agent can find another strategy that is strictly preferred.  

The notion of investment-consumption arbitrage as defined in section \ref{se:1ftap} is
that of absolute arbitrage.  In particular, notice that it
is the appropriate notion when our class of preference relations
are given by utility functions of the form given by equation \eqref{eq:U}
where the functions $c_t \mapsto u(c_0, \ldots, c_T)$ are strictly increasing.

To see how the choice of admissible strategies and preference relations affects this
notion of arbitrage, suppose that we consider 
investors who only receive utility from consumption at a
fixed date having utility functions of the form
$$
\tilde U(H) =  \EE[ u( H_T \cdot P_T ) ].
$$
An appropriate definition of arbitrage in this case is this:

\begin{definition} A terminal-consumption   arbitrage is an investment-consumption
arbitrage $H$ over the  time horizon $T > 0$ such that
$$
\PP\left(  H_T \cdot P_T > 0 \right) > 0.
$$
\end{definition}

In the above definition, we allow the investor to consume before the
terminal date; however, the investor does not receive any utility for
this early consumption.  That is, we have modified the set of preference
relations while fixing the given set 
$
\{ \mathcal{H}_{X_0,T}: X_0, T \}
$
of admissible strategies.  If we also modify the set of admissible
strategies by insisting that the investor does
not consume before the terminal date, we have yet another type of arbitrage:

\begin{definition} A pure-investment arbitrage is a   terminal-consumption   arbitrage
$H$ over the time horizon $T > 0$ such that $(H_t)_{0 \le t \le T}$ is a pure-investment
strategy.
\end{definition}

The following proposition shows that these various types of arbitrages
coincide when there exists a num\'eraire:

\begin{proposition}
Consider a marker for which there exists a num\'eraire strategy.
There exists an investment-consumption arbitrage if and only if
there exists a pure-investment arbitrage.
\end{proposition}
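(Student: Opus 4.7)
The ``if'' direction is essentially by definition: a pure-investment arbitrage is declared to be a particular kind of terminal-consumption arbitrage, which is itself a particular investment-consumption arbitrage, so nothing is to be shown.

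For the converse, the plan is to take an investment-consumption arbitrage $H$ over horizon $T$, with consumption $C_t = (H_t - H_{t+1}) \cdot P_t \ge 0$ (using $H_0 = 0$, $H_{T+1} = 0$), and to eliminate every instance of consumption by reinvesting it in the numéraire strategy. Let $\eta$ be a numéraire strategy with value $N_t = \eta_t \cdot P_t > 0$. I would set
$$
L_t = \sum_{s=0}^{t-1} \frac{C_s}{N_s}, \qquad \tilde H_t = H_t + L_t\, \eta_t \quad \text{for } 1 \le t \le T,
$$
together with $\tilde H_{T+1} = 0$. The increment $L_{t+1} - L_t = C_t / N_t$ is chosen exactly to purchase enough new units of $\eta$ to absorb the consumption $C_t$ at time $t$.

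The verifications are then routine, and I would do them in this order. First, predictability of $\tilde H$ follows because $C_s$ is $\ff_s$-measurable and $N_s$ is $\ff_s$-measurable, so $L_t$ is $\ff_{t-1}$-measurable. Second, using the pure-investment property $(\eta_t - \eta_{t+1}) \cdot P_t = 0$ and the definition of $L$,
$$
(\tilde H_t - \tilde H_{t+1}) \cdot P_t = C_t + (L_t - L_{t+1}) N_t = 0 \quad \text{for } 1 \le t \le T-1,
$$
so $\tilde H$ is pure-investment on that range. Third, $\tilde H_1 \cdot P_0 = H_1 \cdot P_0 + L_1 N_0 = -C_0 + C_0 = 0$, so the initial budget constraint is satisfied. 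Finally, $\tilde H_T \cdot P_T = C_T + L_T N_T \ge 0$.

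The last step is to verify the positivity condition. On the event $\{C_t > 0 \text{ for some } 0 \le t \le T\}$, which has positive probability by the arbitrage assumption, either $C_T > 0$ or else $C_t > 0$ for some $t < T$, in which case $L_T \ge C_t / N_t > 0$; in both cases $\tilde H_T \cdot P_T > 0$. Thus $\tilde H$ is a terminal-consumption arbitrage that is pure-investment on $0 \le t \le T-1$, i.e.\ a pure-investment arbitrage. There is no substantive obstacle; the only delicate point is boundary bookkeeping, namely that the telescoping sum defining $L$ correctly matches the conventions $H_0 = 0$ and $H_{T+1} = 0$.
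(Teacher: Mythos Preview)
Your proof is correct and follows essentially the same approach as the paper: both construct a pure-investment strategy from an investment-consumption arbitrage $H$ by reinvesting each unit of consumption $C_s$ into the num\'eraire portfolio $\eta$, via a strategy of the form $H_t + \eta_t \sum_{s<t} C_s/N_s$. Your version is in fact slightly more careful than the paper's, since your sum $L_t = \sum_{s=0}^{t-1} C_s/N_s$ starts at $s=0$ and hence also reinvests the initial consumption $C_0 = -H_1\cdot P_0$; this ensures $\tilde H_1\cdot P_0 = 0$ exactly and correctly handles the edge case where the only positive consumption of $H$ occurs at time $0$.
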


\begin{proof}
Let $\eta$ be a num\'eraire strategy with corresponding wealth process $\eta  \cdot P  = N$.  Let $H$ be a
self-financing investment-consumption strategy with $H_0=0$, and finally let $K$ be the strategy that
consists of holding at time $t$ the portfolio $H_t$ but of instead of consuming 
the amount $(H_t-H_{t+1})\cdot P_t$,
this money instead is invested into the num\'eraire portfolio.  In notation, $K$ is defined by
$$
K_t  = H_t + \eta_t   \sum_{s=1}^{t-1} \frac{ (H_s - H_{s+1})\cdot P_s }{N_s}  
$$
Note that 
\begin{align*}
(K_t - K_{t+1}) \cdot P_t =&   (H_t-H_{t+1}) \cdot P_t  -  \eta_{t+1} \cdot P_t \frac{  (H_t - H_{t+1})\cdot P_s }{N_{t}}  \\
& +  (\eta_t- \eta_{t+1}) \cdot P_t \sum_{s=1}^{t-1} \frac{  (H_s - H_{s+1})\cdot P_s}{N_s}\\
= & 0
\end{align*}
so $K$ is a pure investment strategy by the assumption that $\eta$ is pure-investment.  Finally 
that if $H_{T+1} = 0$, then
$$
K_T \cdot P_T =  N_T \sum_{s=1}^{T} \frac{ (H_s - H_{s+1})\cdot P_s}{N_s}  \ge 0.
$$
In particular, $K$ is a pure-investment arbitrage if and only if $H$ is an investment-consumption
arbitrage.
\end{proof}
 
Corresponding to these weakened notions of arbitrage are weakened notions
of martingale deflator. 
\begin{definition}
A signed martingale deflator is a (not necessarily positive) adapted process $Y$ such that $PY$ is
an $n$-dimensional martingale.
\end{definition}

A sufficient condition to rule out arbitrage can be formulated in this case.

\begin{theorem}\label{th:ftap-alt}   Suppose that for every $T > 0$, there exists a signed martingale deflator $Y^T = (Y_t^T)_{0 \le t \le T}$
such that $Y^T_T > 0$ almost surely.   Then there is no pure-investment arbitrage.   If
in addition, $Y^T_t \ge 0$ almost surely for all $0 \le t \le T$, then there is no terminal-consumption
arbitrage.
\end{theorem}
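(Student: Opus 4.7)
The plan is to mimic the proof of implication $(2) \Rightarrow (1)$ of Theorem \ref{th:1FTAP}, replacing ``local martingale deflator'' by ``signed martingale deflator.'' The key algebraic identity is
\begin{equation*}
M_t \;:=\; X_t Y_t^T + \sum_{s=0}^{t-1} C_s Y_s^T \;=\; X_0 Y_0^T + \sum_{s=1}^t H_s \cdot (P_s Y_s^T - P_{s-1} Y_{s-1}^T),
\end{equation*}
valid for any self-financing strategy with $X_t = H_t \cdot P_t$ and $C_t = X_t - H_{t+1} \cdot P_t$. Since Proposition \ref{th:mart-trans} is a purely algebraic martingale-transform statement that never uses positivity of the deflator, the process $M$ is a local martingale on $[0, T]$ whenever $PY^T$ is a (signed) martingale---exactly the standing hypothesis of the theorem.

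For the pure-investment statement, suppose for contradiction that $H$ is a pure-investment arbitrage over horizon $T$. The convention $H_0 = H_1$ for pure-investment strategies, combined with zero initial wealth, collapses the identity to $M_0 = 0$ and $M_T = C_T Y_T^T \ge 0$ almost surely, where only $Y_T^T > 0$ a.s.\ is used. Proposition \ref{th:non-neg} upgrades the local martingale $M$ to a true martingale on $[0, T]$, yielding $\EE(M_T) = 0$; combined with $M_T \ge 0$ this gives $M_T = 0$ a.s., and then $Y_T^T > 0$ forces $C_T = H_T \cdot P_T = 0$ a.s., contradicting the arbitrage assumption. For the terminal-consumption statement, I would run the same argument for a terminal-consumption arbitrage $H$: one now has $C_s \ge 0$ for every $0 \le s \le T$ rather than only at $s = T$, but the strengthened assumption $Y_s^T \ge 0$ for all $s$ makes the entire sum $M_T = \sum_{s=0}^T C_s Y_s^T$ non-negative. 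The same upgrade-and-integrate step gives $\EE(M_T) = 0$, the pigeon-hole principle forces each summand to vanish almost surely, and $Y_T^T > 0$ extracts $C_T = 0$ a.s.\ as before.

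The only delicate point is the upgrade of the local martingale $M$ to a true martingale. Unlike the positive-deflator setting of Proposition \ref{th:mart}, one cannot assume $M_t \ge 0$ at intermediate times $t < T$ in the signed-deflator setting: the wealth $X_t$ can itself be negative, so even with $Y_t^T \ge 0$ one loses the elementary supermartingale inequality at each time step. This is handled by invoking Proposition \ref{th:non-neg} in the form that demands only non-negativity of the terminal value $M_T$---the same form already used in the proof of Proposition \ref{th:mart}.
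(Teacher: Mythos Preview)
Your proposal is correct and follows precisely the route the paper indicates: the paper merely writes ``The proof makes use of Proposition~\ref{th:mart}. The details are omitted,'' and you have supplied exactly those details---the summation identity, the martingale-transform Proposition~\ref{th:mart-trans} to get a local martingale, and Proposition~\ref{th:non-neg} (terminal non-negativity only) to upgrade to a true martingale. Your final paragraph correctly isolates the one point where the signed case differs from the positive-deflator case, namely that intermediate values of $M$ need not be non-negative, and correctly observes that Proposition~\ref{th:non-neg} requires only $M_T \ge 0$.
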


The proof makes use of Proposition \ref{th:mart}.  The details are omitted.

Just as we can consider martingale deflators as the dual variables
for investment-consumption utility maximisation problem
described in section \ref{se:invcon}, it is easy to see that we can consider signed
martingale deflators as the dual variables for the
 utility maximisation problem with the  pure-investment objective
$$
 \EE[ u( H_T \cdot P_T ) ]: \ H \mbox{ pure-investment with } H_0\cdot P_0 = X_0.
$$

Now that we have several notions of arbitrage available, we return
to the question of bubbles. Economically speaking,
a market has a bubble if there is an asset whose current price is higher
than some quantification of its fundamental value.  Of course, the fundamental value should
reflect in some way the future value of the asset.   Therefore, 
it is natural to say that a discrete-time market has a bubble if there
exists an investment-consumption arbitrage.   We will now give an
example of such a market that has the additional property that there is 
no terminal-consumption arbitrage and hence no pure-investment arbitrage. 
The idea is that  an agent who is obliged to be fully invested
in the market, such as the manager of a fund which is required to hold assets in a certain
sector, cannot take advantage of the `obvious' risk-less profit opportunity.  
In section \ref{se:relative} we will show that this situation is
analogous to the  continuous-time phenomenon when a market can
have no absolute arbitrage yet have a relative arbitrage, where again, the
obvious risk-less profit is impossible to lock in because of admissibility 
constraints.

Consider a market with one
asset where the price is given by $P_t=\one_{\{ t < \tau\} }$ for some positive, finite
stopping time $\tau$.  In some sense, the fundamental value of this asset is zero,
since $P_t = 0$ for all $t \ge \tau$.  The obvious strategy for an investor
to employ is to sell the asset short at time $0$, and consume the proceeds.  
Then at time $\tau$, the investor buys the asset back from the market at no cost.

We consider two cases.  First, if $\tau$ is unbounded, this strategy is not an investment-consumption
arbitrage according to our definition, since we require an arbitrage to be
concluded at a non-random time $T$.  Indeed, suppose that 
 $\tau$ is not only unbounded but also that on the
event $\{t-1 < \tau \}$ the conditional probability $\PP( t < \tau | \ff_{t-1} )$ is
strictly positive almost surely for all $t \ge 1$.  In this case we can find a martingale deflator $Y$ by
defining
$$
Y_t = \prod_{s=1}^{t \wedge \tau} \PP(s < \tau | \ff_{s-1} )^{-1}.
$$ 

Now suppose that $\tau$ is bounded by a constant $N$, so that $\tau \le N$ almost surely.
There is an investment-consumption arbitrage:  simply 
sell short one share of the asset and consume the proceeds. 
In notation, let $H_t= -1$ for $0 \le t \le N$ and $H_{N+1} = 0$.
The corresponding consumption strategy is $C_0 = 1$ and $C_t = 0$ for $1 \le t \le N$.

On the other hand,  since
there is no num\'eraire asset, there is no way to lock in this arbitrage with a terminal-consumption
strategy.  Indeed, if $H_0 P_0 = 0$ then $H_t P_t \le 0$ for all $t \ge 0$.
 
 One might interpret this example as a discrete-time market with a bubble. 
There might be economic grounds for the existence of such bubbles if sufficiently
many traders can not withdraw gains from trade from a market
account before some specified time horizon.

Alternatively we can see that 
there is no terminal-consumption arbitrage by using Theorem \ref{th:ftap-alt}, by finding a family of 
non-negative signed martingale deflators,
 even when $\tau$ is almost surely bounded by a non-random time $N$. 
Indeed, for the case where $T < N$, let $Y^T_t = 1$ for all $0 \le t \le T$,
and for the case where $T \ge N$, let  $Y_t^T = \one_{\{ 0 \le t < T \}}$.
 Note that
$P Y^T   $ is a martingale  in both cases since
$P_t Y_t^T = 1$ in the $T< N$ case and $P_t Y_t^T = 0$ in the $T \ge N$
case.

\section{Relative arbitrage and bubbles in continuous time}\label{se:relative}
In this section we discuss the notion of relative arbitrage and a 
popular definition of bubble in a continuous-time market.  The results here
are not new, but are included to provide context to the discussion in the previous
section.

In contrast to the notion of an absolute arbitrage,
 a relative arbitrage can be described as follows.  As before,
 we are given both a set of admissible trading strategies and  a
collection of preference relations on this set.
 An arbitrage relative to the benchmark admissible strategy $H$ is 
a strategy $H^{\mathrm{rel}}$ such that $H+H^{\mathrm{rel}}$ is admissible and preferred to $H$ for all preference relations.
Note that unlike the case of absolute arbitrage, a relative arbitrage is not
necessarily scalable since there is no guarantee that $H+kH^{\mathrm{rel}}$ is admissible 
for $k > 1$.  
As in the case of absolute arbitrage, it might also 
be desirable to exclude relative arbitrage from a model, but the argument is weaker.
For instance, if there is a relative arbitrage, then in equilibrium, no
agent would implement the benchmark strategy $H$.  In particular, if $H$ is a buy-and-hold
strategy for one of the assets, then no agent in equilibrium would hold a static position in that asset.

For our discrete-time models, it would be natural to say that 
$H^{\mathrm{rel}}$ is an arbitrage relative to $H$ if 
the initial cost $H^{\mathrm{rel}} \cdot P_0 = 0$ vanishes and the 
consumption stream associated to $H+H^{\mathrm{rel}}$ dominates
the consumption stream associated to $H$, with strict domination
with strictly positive probability.  It easy to see that in
this case $H^{\mathrm{rel}}$ is also an absolute arbitrage.

Therefore, we turn our attention briefly to  continuous-time models.
We consider a market with a continuous semimartingale price process $P$.

First we define the set of self-financing investment-consumption strategies
$$
\mathcal{A}^{*} = \left\{ H:  P\mbox{-integrable}, H\cdot P -  \int H \cdot dP \mbox{ is decreasing } \right\}.
$$
To avoid extraneous complication and highlight the difference between 
relative and absolute arbitrage, we now assume that there exists a num\'eraire strategy.
Recall that in the discrete-time setting this implies that the various notions
of arbitrage discussed in the last section coincide.  In particular, we 
consider only the case of pure-investment arbitrage for simplicity.
The appropriate set of self-financing pure-investement strategies becomes
$$
\mathcal{A}^{\circ} = 
\left\{ H:  P\mbox{-integrable}, H_t\cdot P_t= H_0\cdot P_0 +  \int_0^t H_s \cdot dP_s \mbox{ a.s. for all } t \ge 0 \right\}.
$$

Unlike the discrete-time setting, it is well known that in continuous time we must restrict the
strategies available to investors in order to avoid trivial arbitrages
arising from doubling strategies.  Therefore, we 
assume that for every admissible strategy, the investor's wealth remains
non-negative.   That is, we let
$$
\mathcal{A} = \{ H \in \mathcal{A}^\circ:  H_t\cdot P_t \ge 0 \mbox{ a.s. for all } t \ge 0 \}.
$$
 If $H$ is a given admissible strategy, then a  relative
arbitrage $H^{\mathrm{rel}}$ has the property that the wealth generated by
$H+ H^{\mathrm{rel}}$ is non-negative at all times; that is,
we have
$$
H^{\mathrm{rel}}_0 \cdot P_0 = 0 \mbox{ and } H^{\mathrm{rel}}_t \cdot P_t \ge - H_t \cdot P_t \mbox{ a.s. for all } t \ge 0
$$
On the other hand, a candidate absolute arbitrage  $H^{\mathrm{abs}}$ should be an arbitrage relative to any admissible
$H$, and hence the wealth it generates should be non-negative:
$$
H^{\mathrm{abs}}_0 \cdot P_0 = 0 \mbox{ and } H^{\mathrm{abs}}_t \cdot P_t \ge 0 \mbox{ a.s. for all } t \ge 0
$$

We state here a sufficient condition to rule out arbitrage.

\begin{proposition} \label{th:no-relarb}  There is
no absolute arbitrage if there exists a positive continuous semimartingale $Y$ such that
$YP$ is a  local martingale.  There is
no arbitrage relative to an admissible strategy $H$ if the process $H\cdot P Y$ is a
true martingale.  
\end{proposition}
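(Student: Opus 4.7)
The plan is to reduce both statements to a single core identity, the continuous-time analogue of Proposition \ref{th:mart}: if $H \in \mathcal{A}^\circ$ is a pure-investment strategy and $Y$ is a positive continuous semimartingale with $PY$ a local martingale, then $H \cdot PY$ is itself a local martingale. I would verify this by applying It\^o's product formula to the scalar process $(H\cdot P) Y$ and using the self-financing identity $d(H \cdot P)_t = H_t \cdot dP_t$:
\[
d(H_t \cdot P_t Y_t)
 = (H_t \cdot P_t)\, dY_t + Y_t\, H_t \cdot dP_t + H_t \cdot d[P,Y]_t
 = H_t \cdot d(P_t Y_t).
\]
So $H\cdot PY$ exhibits itself as the stochastic integral of $H$ against the local martingale $PY$, and continuity of $Y$ is what legitimises collapsing the covariation terms.

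With this in hand, the first claim follows exactly as the (2) $\Rightarrow$ (1) direction of Theorem \ref{th:1FTAP}. For a candidate absolute arbitrage $H^{\mathrm{abs}}$, the requirements $H^{\mathrm{abs}}_0 \cdot P_0 = 0$ and $H^{\mathrm{abs}}_t \cdot P_t \ge 0$ combined with positivity of $Y$ make $H^{\mathrm{abs}} \cdot PY$ a non-negative local martingale starting at zero, hence a supermartingale by Fatou. Consequently $\EE(H^{\mathrm{abs}}_t \cdot P_t Y_t) \le 0$, and the pigeon-hole principle forces $H^{\mathrm{abs}}_t \cdot P_t = 0$ almost surely for every $t$, so $H + H^{\mathrm{abs}}$ produces the same wealth as $H$ and cannot be strictly preferred to it.

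For the second claim I would apply the identity to $H + H^{\mathrm{rel}}$, whose admissibility makes $(H + H^{\mathrm{rel}}) \cdot PY$ a non-negative local martingale, hence a supermartingale with
\[
\EE\bigl((H_T + H^{\mathrm{rel}}_T) \cdot P_T Y_T\bigr) \le H_0 \cdot P_0 Y_0.
\]
Invoking the added hypothesis that $H \cdot PY$ is a \emph{true} martingale replaces the right-hand side by $\EE(H_T \cdot P_T Y_T)$, and subtracting yields $\EE(H^{\mathrm{rel}}_T \cdot P_T Y_T) \le 0$. Because any relative arbitrage must satisfy $H^{\mathrm{rel}}_T \cdot P_T \ge 0$ almost surely (otherwise it would fail to dominate the benchmark's wealth), the pigeon-hole argument forces equality almost surely and rules out strict preference. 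The main subtlety here, and the reason the true-martingale hypothesis is strictly needed in the second claim but not the first, is that supermartingality provides only a one-sided expectation bound; to cancel the contribution of the benchmark one needs an equality, which is precisely what the true-martingale assumption delivers.
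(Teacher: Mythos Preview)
Your proof is correct and follows essentially the same approach as the paper: both establish the It\^o identity $d((K\cdot P)Y) = K\cdot d(PY)$ for a self-financing strategy $K$, deduce that $K\cdot PY$ is a non-negative local martingale (hence a supermartingale) whenever $K$ is admissible, and then subtract the true-martingale equality for $H$ from the supermartingale inequality for $H+H^{\mathrm{rel}}$ to force $\EE(H^{\mathrm{rel}}_T\cdot P_T Y_T)\le 0$. The only structural difference is that the paper handles the absolute case as the special instance $H=0$ of the relative argument (since the zero process is trivially a true martingale), whereas you treat the two claims separately; this is purely a matter of presentation.
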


\begin{proof}
Let $K$ be an admissible strategy. 
By the It\^o's formula, the Kunita--Watanabe formula and the self-financing condition
we have
$$
K_t \cdot P_t Y_t = K_0 \cdot P_0 Y_0 + \int_0^t K_s \cdot d(P_s Y_s).
$$
In particular, by the integral representation on the right-hand side, we have that
$K \cdot P Y$ is a local martingale.  And by admissibility, this local martingale
is non-negative hence is a supermartingale by Fatou's lemma.  In particular,
$$
\EE (K_T \cdot P_T Y_T ) \le  K_0 \cdot P_0 Y_0. 
$$
Now letting $K = H + H^*$ where $H \cdot P Y$ is a true martingale and $H^*_0 \cdot P_0 = 0$,
we have
$$
\EE (H^*_T \cdot P_T Y_T ) \le  0. 
$$
Hence there is no arbitrage relative to $H$.  Since we may let $H=0$, there is no absolute
arbitrage.
\end{proof}

\begin{remark}
The notion of an absolute arbitrage used here is closely related to the   
 num\'eraire-independent property of no arbitrage
of the first kind (NA1).  Recently, a converse
to the above proposition has been proven, that the market model has NA1
if and only if there exists a local martingale deflator, 
in the one-dimensional case by Kardaras \cite{K} and in the multi-dimensional
case by Schweizer \& Takaoka  \cite{SchweizerTakaoka}.
\end{remark}

\begin{remark}  Note that the process $M = H\cdot P Y$ is always a local martingale
when $Y$ is a local martingale deflator.  It is a true martingale if and only if $M$
is of class DL, that is, the collection of random variables
$$
\{ M_{\tau \wedge t}: \tau \mbox{ a stopping time }\}
$$
is uniformly integrable for all $t \ge 0$.   When $M$ is a true martingale and $H$ is
a num\'eraire strategy,  one can define an equivalent martingale measure relative to 
$H$ as described in section \ref{se:num}.

The notion of a relative arbitrage is closely related to no free lunch with vanishing risk (NFLVR).
Indeed, consider the case when the reference strategy $H$ is a num\'eraire, 
so that it generates a strictly positive wealth process $N$. In this case,  
 a candidate relative arbitrage $H^*$ is such that the discounted wealth $H^* \cdot P/N$ is bounded
from below by the constant $-1$.   In a celebrated paper of
 Delbaen \& Schachermayer \cite{DelSch} proved another converse of the above proposition,
that a market model has  NFLVR 
if and only if there exists an equivalent sigma-martingale measure. 
\end{remark}

  A typical example of a market with a relative arbitrage but no absolute arbitrage
	has  two
assets.  The first is cash with constant unit
price, and the second is a risky stock with positive price process $S$. 
 Suppose $S$ is a local
martingale.  

The process $P=(1,S)$ is a local martingale, so we can take  $Y=1$ to be a local martingale deflator. 
By Proposition \ref{th:no-relarb}  there cannot be an absolute
arbitrage.  Furthermore, since the value of holding a static position of cash is constant, 
there can be no arbitrage relative to the strategy $(1,0)$. 

However, suppose now that $S$ is a strictly local martingale (and hence a supermartingale),
that the filtration is generated by a Brownian motion and that the volatility of $S$ is strictly positive.
Then there does exist a relative arbitrage relative to the strategy $(0,1)$ of holding one share of the stock.  Indeed, for any fixed
horizon $T > 0$ there exists a pure-investment trading strategy such that
$$
H^{\mathrm{rep}}_t\cdot P_t = \EE(S_T | \ff_t) \le S_t
$$
by the martingale representation theorem.  Note that the strategy $H^* = H^{\mathrm{rep}}-(0,1)$,
that is longing the dynamic replication strategy and shorting the stock,
is a relative arbitrage.  It is not an absolute arbitrage since
$H^*$ is itself not admissible. 

 The phenomenon exhibited by this example has been proposed to model
price bubbles, since the simple strategy of buying and holding the stock
is dominated by a dynamic replication strategy $H^{\mathrm{rep}}$. 
 See the recent paper of Herdegen \cite{Herdegen} or the presentation of Schweizer \cite{schweizer} for a 
discussion of this point.
However, the above example uses in a fundamental way the special properties of continuous
time.  Indeed, if $S$ is a positive local martingale in \textit{discrete} time,
then $S$ is automatically a true martingale by Proposition \ref{th:non-neg}.  

 \begin{remark}  There is a tantalising parallel between the continuous-time
bubbles discussed above and the discrete-time bubbles of the last section.   Indeed,
in continuous time, bubbles arise when the positive process $M =  NY$ is a strictly
local martingale, where $N$ is a num\'eraire and $Y$ a local martingale deflator.
Recall that a continuous positive strictly local martingale $M$ can be constructed as follows.

Let $X$ be a continuous non-negative true martingale with respect to a measure $\PP$, where $X_0 =1$.
Fix a time horizon $T > 0$ and define an absolutely continuous measure $\QQ$ with
density 
$$
\frac{d \QQ}{d\PP} = X_T.
$$
Now let $\tau = \inf\{ t \ge 0: X_t = 0 \}$ be the first time that $X$ hits zero.
Finally, let $M$ be defined as
$$
M_t = \frac{ \one_{\{ t < \tau\}}}{X_t}.
$$
The process $M$ is a $\QQ$-local martingale.  Indeed,  it is easy to check that
the sequence of stopping times $\tau_n = \inf\{ t \ge 0: X_t = 1/n \}$ localises $M$ to a bounded
$\QQ$-martingale.  Furthermore, $M$ is strictly positive $\QQ$-almost surely since 
\begin{align*}
\QQ( \tau \le T ) & = \EE^{\PP} (X_T \one_{ \{ \tau \le T \}} ) \\
&= \EE^{\PP} (X_\tau \one_{ \{ \tau \le T \}} ) \\
&= 0.
\end{align*}
However, note that 
$$
\EE^{\QQ}( X_T ) = \PP( \tau > T ).
$$
In particular, $M$ is a true $\QQ$-martingale if and only if $X$ is 
strictly positive $\PP$-almost surely. See, for instance, the paper of Ruf \& Rungaldier  \cite{RR} for further details.

As a consequence of the above discussion,
in the continuous-time story, there is a bubble if a 
certain process $X$ hits zero with positive probability.
On the other hand, in the discrete-time theory  there is 
no terminal consumption arbitrage if there is a non-negative signed
martingale deflator $Y$.  However, there may
exist an investment-consumption arbitrage--that is, a bubble-- if $Y$ hits zero.
\end{remark}
\section{The proofs}\label{se:proofs}

\subsection{Proof of Theorem \ref{th:1FTAP}}

Recall that we need only prove that (1) implies both (2) and (4).  
First, we need an additional equivalent, but more technical, formulation of Theorem \ref{th:1FTAP}:

\begin{theorem}\label{th:1FTAP-tech}  Conditions (1)-(4) of Theorem \ref{th:1FTAP} is equivalent to
\begin{enumerate}
\item[(5)] For every $t \ge 1$ and positive $\ff_t$-measurable $\zeta$, there exists a positive
 $\ff_t$-measurable random variable $Z$ and positive $\ff_{t-1}$-measurable random variable $R$
 such that
 $$
 Z  \le R \zeta  \mbox{ almost surely}
 $$
 and 
 $$
 \EE( P_t Z  | \ff_{t-1}) = P_{t-1}
$$
 in the sense of generalised conditional expectation.
\end{enumerate}
\end{theorem}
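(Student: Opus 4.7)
The plan is to splice condition (5) into the chain of equivalences of Theorem~\ref{th:1FTAP}. Since (2)$\Leftrightarrow$(3), (2)$\Rightarrow$(1) and (4)$\Rightarrow$(1) are already available, and (4)$\Rightarrow$(3) is trivial (a bounded martingale deflator is still a martingale deflator), it suffices to establish (4)$\Rightarrow$(5), (5)$\Rightarrow$(4), and (1)$\Rightarrow$(5).

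The direction (4)$\Rightarrow$(5) is immediate. Given $t \ge 1$ and positive $\ff_t$-measurable $\zeta$, apply condition (4) at horizon $T = t$ with any positive adapted $\eta$ such that $\eta_t = \zeta$ (for instance $\eta_s = 1$ for $s < t$) to produce a martingale deflator $(Y_s)_{0 \le s \le t}$ with $Y_t \le \zeta$. Then $Z := Y_t / Y_{t-1}$ and $R := 1/Y_{t-1}$ are positive, of the correct measurability, and satisfy $Z \le R\zeta$; dividing the martingale identity $\EE(P_t Y_t \mid \ff_{t-1}) = P_{t-1} Y_{t-1}$ by $Y_{t-1}$ yields $\EE(P_t Z \mid \ff_{t-1}) = P_{t-1}$.

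For (5)$\Rightarrow$(4) the plan is induction on the horizon $T$, with base case $T = 0$ settled by $Y_0 := \eta_0$. For the inductive step I would apply (5) at time $T$ with $\zeta := \eta_T \wedge (1+|P_T|)^{-1}$ to obtain $Z_T$ and $R_T$ satisfying $Z_T \le R_T \zeta$ and $\EE(P_T Z_T \mid \ff_{T-1}) = P_{T-1}$, where the truncation $(1+|P_T|)^{-1}$ forces $|P_T Z_T| \le R_T$ and so delivers the integrability needed downstream. The $\ff_{T-1}$-measurable factor $R_T$ is then absorbed by sharpening the earlier bound: define $\tilde\eta_{T-1} := \eta_{T-1} \wedge (1/R_T)$ and $\tilde\eta_s := \eta_s$ for $s < T-1$. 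The inductive hypothesis produces a martingale deflator $(\tilde Y_s)_{0 \le s \le T-1}$ with $\tilde Y \le \tilde\eta$, and extending via $Y_s := \tilde Y_s$ for $s \le T-1$ and $Y_T := \tilde Y_{T-1} Z_T$ gives $Y_T \le (1/R_T)\, R_T\, \zeta \le \eta_T$, the bound $|P_T Y_T| \le \tilde Y_{T-1} R_T \le 1$ for integrability, and the martingale identity $\EE(P_T Y_T \mid \ff_{T-1}) = Y_{T-1} P_{T-1}$ straight from the conclusion of (5).

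The main technical obstacle is (1)$\Rightarrow$(5), which is essentially a conditional, measurable version of the one-period num\'eraire-free fundamental theorem (Proposition~\ref{th:numfree1}). Following the Rogers-style utility-maximisation strategy advertised by the paper, the plan is to realise $Z$ as the dual variable for a strictly concave one-period consumption problem whose data incorporate $\zeta$, solved $\ff_{t-1}$-pathwise. Strict concavity gives uniqueness of the optimiser and hence $\ff_t$-measurability without invoking abstract selection theorems, while the normalising random variable $R$ emerges as the reciprocal of a conditional marginal-utility expectation; this built-in $\ff_{t-1}$-measurable factor is precisely what is needed to absorb the loss of integrability inherent in working conditionally on $\ff_{t-1}$. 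Verifying that no-arbitrage is exactly what guarantees this optimiser is attained pathwise, while keeping the selection hands-on, will be the crux.
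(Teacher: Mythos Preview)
Your overall architecture matches the paper's: both use a backward recursion to pass from (5) to (4) by absorbing the $\ff_{t-1}$-measurable factor $R$ into the target process at the previous step (the paper writes this as choosing $\zeta_t = \frac{\eta_t}{\eta_{t-1}(1+R_t)}$ and setting $Y_t = Y_0 Z_1 \cdots Z_t$, you phrase it as induction on $T$ with $\tilde\eta_{T-1} = \eta_{T-1} \wedge R_T^{-1}$, but the mechanism is identical), and both point to a Rogers-style one-period utility maximisation for (1)$\Rightarrow$(5). Your direct proof of (4)$\Rightarrow$(5) is a clean addition that the paper does not bother with.

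There is, however, a genuine gap in the way you close the chain. You claim (4)$\Rightarrow$(3) is trivial because ``a bounded martingale deflator is still a martingale deflator'', but condition (4) only delivers, for each fixed horizon $T$, a deflator $(Y_t^T)_{0 \le t \le T}$ on the finite window $[0,T]$; these processes need not be mutually consistent as $T$ varies, and nothing in your argument produces a single process on $[0,\infty)$. The paper explicitly warns about this just after stating Theorem~\ref{th:1FTAP}. Without this step your chain yields only (1)$\Leftrightarrow$(4)$\Leftrightarrow$(5), leaving (2) and (3) unconnected. The repair is a one-liner once you already have (5): pick any $(Z_t)_{t \ge 1}$ supplied by (5) and set $Y_t = Z_1 \cdots Z_t$; then $PY$ is a local martingale over the full time axis by Proposition~\ref{th:locmart}, giving (5)$\Rightarrow$(2). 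This is precisely how the paper closes the loop---it proves (5)$\Rightarrow$(2) and (5)$\Rightarrow$(4) rather than your (4)$\Leftrightarrow$(5)---and with that extra line your argument is complete.
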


\begin{proof}  [Proof of (5) $\Rightarrow$ (2) of Theorem \ref{th:1FTAP-tech}]
For each $t \ge 1$, let $Z_t$ be such that 
$$
 \EE( P_t Z_t  | \ff_{t-1}) = P_{t-1}
$$
 in the sense of generalised conditional expectation.  Let $Y_0 = 1$ and
$Y_t =  Z_1 \cdots Z_t$.   Note that by Proposition \ref{th:locmart} stated
in section \ref{se:tech} below, the process $YP$ is a local martingale.   
Hence $Y$ is a local martingale deflator.
\end{proof}

\begin{proof}  [Proof of (5) $\Rightarrow$ (4) of Theorem \ref{th:1FTAP-tech}]
By replacing, the given positive process $\eta$ with $\hat \eta$ defined by
$$
\hat \eta_t = \min\{ \eta_t, e^{-\| P_t\| } \}.
$$
we can assume that the process $P\eta$ is bounded.   In particular,
once we show that given  $\eta = (\eta_t)_{0 \le t \le T}$ there
exists a local martingale deflator such that $Y_t \le \eta_t$ for $0 \le t \le T$, 
we can conclude that $Y$ is a true martingale deflator since the process
 $PY$ is integrable.

Given  the process $\eta = (\eta_t)_{0 \le t \le T}$ we will construct 
random variables $(Z_t)_{1 \le t \le T}$ such that 
the process $(Y_t)_{0 \le t \le T}$ is a local martingale deflator, where $Y_t = Y_0 Z_1 \cdots Z_t$. 
We need only show that we can do this construction in such a way that $Y_t \le \eta_t$.

Let $\zeta_T = \eta_T/\eta_{T-1}$.  By condition (5), there exists a positive random
variable $Z_T$ and a positive $\ff_{T-1}$-measurable random variable $R_{T-1}$
such that
$$
Z_T \le R_{T-1} \zeta_T.
$$
Now we proceed backwards by specifying $\zeta_{T-1},  \ldots, \zeta_1$ 
to find $Z_{T-1}, \ldots, Z_1$ and corresponding bounds $R_{T-2}, R_{T-3}, \ldots, R_0$ such that
$$
Z_t \le R_{t-1} \zeta_t
$$
by letting
$$
\zeta_t = \frac{\eta_t}{\eta_{t-1}} \left( \frac{1}{1+ R_t} \right).
$$
The process $Y_t =  \frac{\eta_0}{1+ R_0} Z_1 \cdots Z_t$ is a local martingale deflator
such that $Y_t \le \eta_t$ for $0 \le t \le T$, as desired. 
\end{proof}

We now come the converse direction.  The following proof is adapted from Rogers's
proof \cite{rogers} of the Dalang--Morton--Willinger theorem in the case 
when a num\'eraire asset is assumed to exist.  The idea there is to show that no arbitrage implies a certain
pure-investment utility optimisation problem has an optimal solution.  The idea here is very similar:
no arbitrage
implies that a certain investment-consumption utility maximisation problem has an optimal solution.

\begin{proof} [Proof of (1) $\Rightarrow$ (5) of Theorem \ref{th:1FTAP-tech}]
Fix $t \ge 1$, and suppose the positive $\ff_t$-measurable  random variable $\zeta$ is
given.  By replacing $\zeta$ with $\zeta \wedge 1$ there is no loss assuming 
$\zeta$ is bounded.  Let
$$
\hat \zeta =  \zeta e^{- \| P_t\|^2/2}
$$ 
 and define a function $F: \RR^n \times \Omega \to \RR$
by
$$
F(h) = e^{ h \cdot P_{t-1} } + \EE[ e^{- h \cdot P_t } \hat \zeta | \ff_{t-1}].
$$
More precisely, let $\mu$ be the regular conditional joint distribution of $(P_t, \zeta)$ given $\ff_{t-1}$, and  let
$$
F(h, \omega) = e^{h \cdot P_{t-1}(\omega)} + \int e^{-h \cdot y - \|y\|^2/2} z  \mu( dy, dz, \omega) .
$$
Note that $F(\cdot, \omega)$ is everywhere finite-valued, and hence is smooth.
We will show that no investment-consumption arbitrage implies that for
each $\omega$, the function $F(\cdot, \omega)$ has a minimiser $H^*(\omega)$
such that $H^*$ is  $\ff_{t-1}$-measurable.   By the first order condition for a minimum, we have
$$
0 = \nabla F( H^*) =  e^{ H^* \cdot P_{t-1} } P_{t-1} - \EE[ e^{- H^* \cdot P_t } \hat{\zeta} P_t | \ff_{t-1}]
$$
and hence we may take
$$
Z = e^{ - H^* \cdot P_{t-1} - H^* \cdot P_t} \hat{\zeta}.
$$
Note that 
$$
Z \le R \zeta,
$$
where $R$ is the $\ff_{t-1}$-measurable random variable
$$
R =e^{ - H^* \cdot P_{t-1} + \| H^* \|^2/2 }.
$$

With the above goal in mind, we define  functions $F_k: \RR^n \times \Omega \to \RR$ by
$$
F_k(h, \omega) = F(h, \omega) +  \|h \|^2/k.
$$
Now for fixed $\omega$, the function $F_k(\cdot, \omega)$ is smooth, strictly convex and 
$$
F_k(h, \omega) \to \infty \mbox{ as } \|h \| \to \infty.
$$
In particular, there exists a unique minimiser $H_k(\omega)$, and by Proposition \ref{th:meas} 
$H_k$ is $\ff_{t-1}$-measurable.    

We will make use of two observations.  
First, note that $H_k$ enjoys a certain
non-degeneracy property.  To describe it, let
$$
\mathcal{U}(\omega) = \{ u \in \RR^n:  u \cdot P_{t-1}(\omega) = 0, 
\PP( u\cdot P_t = 0 | \ff_{t-1})(\omega) = 1 \} 
$$
and let $\mathcal{V}(\omega) = \mathcal{U}(\omega)^{\perp}$.  Note
that the minimiser $H_k(\omega)$ is in $\mathcal{V}(\omega)$ for each $\omega$ since
$$
F(u+v) = F(v)
$$
and hence
$$
F_k(u + v) \ge F_k(v)
$$
whenever $u \in \mathcal{U}$ and $v \in \mathcal{V}$.  

Second, note that  $F(H_k) \to \inf_h F(h)$ almost surely, since
\begin{align*}
\limsup_k F(H_k) & \le \limsup_k F_k(H_k) \\
& \le \limsup_k F_k(h) \\
& = F(h) 
\end{align*}
for all $h \in \RR^n$.  

Now, let 
$$
A =  \{ \sup_k \|H_k\| < \infty \}
$$
be the $\ff_{t-1}$-measurable set on which the sequence $(H_k(\omega))_k$ is bounded.   Hence,
by Proposition \ref{th:select} we can extract a measurable subsequence on which that $H_k$ converges on $A$
to a $\ff_{t-1}$ measurable $H^*$.  Note that by the smoothness of $F$, we have
\begin{align*}
F(H^*) &= \lim_k F(H_k) \mbox{ on } A
\end{align*}
 Hence $H^*$ is a minimiser of $F$ by the second observation, 
and the proof is complete once we show that $\PP(A) = 1$.  We 
will make use of the fact that $H^* \in \mathcal{V}$ by the first observation.

We now show $\PP(A^c) = 0$.   
Now on $A^c$ the sequence $( H_k)_k$ is unbounded.  Hence, we can find a measurable subsequence
along which $\|H_k \| \to \infty$ on $A^c$.  Since the sequence
$$
\hat H_k = \frac{ H_k }{\| H_k \| }
$$ 
is bounded, and indeed $\| \hat H_k \| =  1$ for all $k$, there exists a further subsequence along which
$(\hat H_k)_k$ converges on $A^c$ to a $\ff_{t-1}$-measurable random variable $\hat H$.  Note that
$\| \hat H(\omega) \| = 1$ for all $\omega \in A^c$.    Letting
\begin{equation}\label{eq:H*}
\bar H = \hat H \one_{A^c},
\end{equation}
we need only show that $\bar H = 0$ almost surely.

First note that on $A^c \cap  \{   \hat H  \cdot P_{t-1} < 0 \}$ we have
$$
e^{  H_k \cdot P_{t-1} } = (e^{ -\hat H_k \cdot P_{t-1} })^{\|H^*_k\|} \to \infty.
$$
But since 
\begin{align*}
\limsup_k e^{  H_k \cdot P_{t-1} } &\le \limsup_k F(H_k) \\
& = \inf_h F(h) \\
& \le F(0) = 2
\end{align*}
by the second observation above, we conclude that 
\begin{equation}\label{eq:t-1}
\PP(A^c \cap  \{   \hat H  \cdot P_{t-1} > 0 \}) = 0.
\end{equation}

Similarly, on $A^c$ we have by Fatou's lemma and Markov's inequality  
\begin{align*}
\PP(  \hat H  \cdot P_t < 0 |\ff_{t-1} ) &= \sup_{\varepsilon > 0 } \PP(  \hat H  \cdot P_t < - \varepsilon |\ff_{t-1} ) \\
& \le \sup_{\varepsilon > 0 } \liminf_k \PP(  -H_k  \cdot P_t  > \|H_k\| \varepsilon  | \ff_{t-1} ) \\
& \le \sup_{\varepsilon > 0 } 
\liminf_k \frac{ \EE( e^{ -H_k  \cdot P_t } \hat \zeta | \ff_{t-1} )}{  e^{ \|H_k\| \varepsilon} \EE( \hat \zeta | \ff_{t-1}) } \\
& = 0
\end{align*}
since 
$$
\limsup_k  \EE( e^{ -H_k  \cdot P_t } \hat \zeta | \ff_{t-1} )  \le \limsup_k F(H_k) \le 2.
$$
In particular, we can conclude that 
\begin{equation}\label{eq:t}
\PP( A^c \cap \{\hat H  \cdot P_t < 0 \}) = 0.
\end{equation}

Now consider the investment-consumption strategy $(H_s)_{0 \le s \le t}$ defined by
$H_s = 0$ for $0 \le s \le t-1$ and $H_t = \bar H $ where $\bar H$ 
is defined by equation \eqref{eq:H*}.
Note that by
equations \eqref{eq:t-1} and \eqref{eq:t} this strategy is self-financing.
By assumption that there is no investment-consumption arbitrage, we now conclude that
\begin{align*}
1 &=  \PP( \bar H\cdot P_{t-1} = 0, \ \bar H  \cdot P_t = 0 ) \\
& = \PP( \bar H\cdot P_{t-1} = 0, \  \PP( \bar H  \cdot P_t = 0 | \ff_{t-1} ) = 1 ) \\
& \le \PP( \bar H = 0) = \PP(A)
\end{align*}
where we have used the observation that $\bar H \in \mathcal V$
plus the fact that $\| \bar H \| = \one_{A^c}$ between the second
and third line.   This concludes the proof.
\end{proof}

\subsection{Proof of Theorem \ref{th:superrep}}
The dual characterisation of super-replicable claims can be proven
using the same utility maximisation idea as in the proof of Theorem \ref{th:1FTAP-tech}
above.  Indeed, the insight is that super-replication is the optimal
hedging policy for an investor in the limit of large risk-aversion.

\begin{proof}[Proof of Theorem \ref{th:superrep}]
Fix $t \ge 1$, and suppose that
$$
\EE( Z \xi_t | \ff_{t-1} ) \le \xi_{t-1}
$$
for any $\ff_t$-measurable $Z$ such that 
$$
\EE( P_t Z | \ff_{t-1} )  = P_{t-1}.
$$
We will show that there exists an $\ff_{t-1}$-measurable $H$ such that
$$
H \cdot P_{t-1} \le \xi_{t-1} \mbox{ and } H \cdot P_t \ge \xi_t \mbox{ almost surely}.
$$
For the sake of integrability, we introduce a factor $\zeta = e^{-(\|P\|_t^2 + \xi_t^2)/2}$
and define a family of random functions
$$
F_\gamma(h) = e^{- \gamma (\xi_{t-1} -h\cdot P_{t-1}) } + \EE[ e^{ -\gamma ( h \cdot P_t - \xi_t) } \zeta| \ff_{t-1}],
$$
where $\gamma > 0$ has the role of risk-aversion parameter.  
Since there is no arbitrage, we can reuse the argument from the proof of the
(1) $\Rightarrow$ (5) implication of Theorem \ref{th:1FTAP-tech}
to conclude that $F_\gamma$ has a $\ff_{t-1}$-measurable minimiser $H_\gamma$ with 
a corresponding $\ff_t$-measurable random variable
$$
Z_\gamma = \frac{ e^{ -\gamma (H_\gamma \cdot P_1 - \xi_t) } \zeta}{e^{- \gamma (\xi_{t-1} -H_\gamma\cdot P_{t-1}) }}
$$
with the property that 
$$
\EE( P_t Z_\gamma | \ff_{t-1} )  = P_{t-1}.
$$
Note that 
\begin{align*}
\frac{\partial}{\partial \gamma} F_\gamma( h) |_{h=H_\gamma} 
& = {e^{- \gamma (\xi_{t-1} -H_\gamma\cdot P_{t-1}) }}\big[( \EE( \xi_t  Z_\gamma | \ff_{t-1}) - \xi_{t-1})
+ H_\gamma \cdot ( P_{t-1} - \EE(P_t  Z_\gamma | \ff_{t-1}) \big] \\
& \le 0
\end{align*} 
by assumption. 

Also note that $\gamma \mapsto H_\gamma$ is differentiable.  Indeed, recall
that $H_\gamma$ is defined as the root of the function $\nabla F_\gamma: \mathcal V \to \mathcal V$, 
and $D^2 F_\gamma$ is a strictly positive definite operator on $\mathcal{V}$, so the
differentiability of $H_\gamma$ follows from the implicit function theorem. 
Furthermore, 
$$
F_{\gamma}(H_{\gamma}) \le F_{\gamma}(H_{\gamma \pm \varepsilon})
$$
since $H_{\gamma}$ is the minimiser of $F_{\gamma}$
and hence
$$
\frac{\partial}{\partial \gamma} F_g( H_\gamma) |_{g= \gamma} = 0.
$$
Putting this together implies $\gamma \mapsto F_\gamma(H_\gamma)$ is nonincreasing, and in particular
$$
\sup_{\gamma \ge 1} F_\gamma(H_\gamma) < \infty.
$$
where  for the rest of the proof we will only consider positive integer values for $\gamma$. 
Let 
$$
A = \{ \sup_{\gamma} \| H_\gamma\| < \infty \}
$$
be the $\ff_{t-1}$ event on which the sequence $( H_\gamma)_\gamma$ is bounded.  By
Proposition \ref{th:select} there exists a measurable subsequence which converges to 
$\ff_{t-1}$-measurable $H^*$ on $A$.   

Note that 
$$
\PP\left( A \cap \left[ \{ H^*\cdot P_{t-1} > \xi_{t-1} \} \cup \{H^*\cdot P_t < \xi_t \} \right] \right)  
\le \PP( F_\gamma(H_\gamma) \to \infty) = 0
$$
We need only show that $\PP(A)=1$.

We now prove that the event $A^c$ on which the sequence $(H_{\gamma})_{\gamma}$ is unbounded
has probability zero.  This follows the same steps as in the proof of Theorem \ref{th:1FTAP-tech}.
Again by Proposition \ref{th:select} we 
consider a subsequence such that $\| H_{\gamma} \| \to \infty$ and then
find a further subsequence and $\ff_{t-1}$-measurable random variable with $\| H_\gamma\|=1$ such that
$$
\hat H_\gamma \to \hat H
$$
where 
$$
\hat H_\gamma = \frac{H_{\gamma}}{\| H_{\gamma} \|}.
$$
Since 
$$
F_\gamma(H_\gamma) = (e^{ \hat H_\gamma \cdot P_{t-1} - \hat{\xi}_{t-1}})^{\gamma \|H_\gamma\|}
+ \EE[ (e^{  \hat{\xi}_t  - \hat H_\gamma \cdot P_t} )^{\gamma \|H_\gamma\|} \zeta| \ff_{t-1}],
$$
where 
$$
\hat \xi_s  = \frac{\xi_s}{\| H_{\gamma} \|} \to 0 \mbox{ for } s=t-1, t,
$$
we can conclude by the boundedness of $F_\gamma(H_\gamma)$ that
$$
\PP\left( A^c \cap \left[ \{ \hat H \cdot P_{t-1} > 0 \} \cup \{ \hat H \cdot P_t < 0 \} \right] \right)  = 0.
$$
As before, we use the assumption of no arbitrage and the non-degeneracy of 
$\hat H$ to conclude that $\PP(A^c) = 0$.
\end{proof}

\section{Appendix}\label{se:tech}
\subsection{Generalised conditional expectations and local martingales}
In this appendix we recall some basic notions and useful facts regarding discrete-time local martingales.

First we briefly recall a definition of the conditional expectation:

\begin{definition} Given a probability space $(\Omega, \ff, \PP)$, 
the non-negative random variable $X$ and sub-sigma-field $\mathcal{G} \subseteq \ff$, we define
the conditional expectation $\EE(X | \mathcal{G})$ to be the almost surely unique 
$\mathcal G$ measurable random variable $Y$ such that 
$$
\EE(X \one_G ) = \EE( Y \one_G)
$$
for all events $G \in \mathcal G$.  
If $\EE( |X| \ | \mathcal{G} ) < \infty$ almost surely, we define
$$
\EE(X | \mathcal{G}) =  \EE(X^+ | \mathcal{G})-\EE(X^- | \mathcal{G}).
$$
\end{definition}

Note that it is not necessary for $X$ to be integrable in order to define
the conditional expectation $\EE(X | \mathcal{G})$ as above.  However,
one must take care with this generalised notion of conditional expectation since
some of the familiar rules of calculation for integrable random variables
fail in general.
For instance, it  is possible to find a random variable $X$ and sigma-fields $\mathcal{G}$
and $\mathcal{H}$ such that both conditional expectations
$\EE(X | \mathcal{G})$ and $\EE(X | \mathcal{H})$ are defined,
and in fact, both are integrable, and yet
$$
\EE[\EE(X | \mathcal{G})] \ne \EE[\EE(X | \mathcal{H})].
$$
Of course, such pathologies do not occur if $X$ is integrable.

Below are three  useful properties of the generalised conditional expectation.
They are probably familiar to the reader, but are included here
to avoid unexpected pathologies as described above.
Their proofs can be found in Chapter 4.2 of \c{C}inlar's book \cite{cinlar}.

\begin{proposition}[Tower property]\label{th:tower}
Fix nested sigma-fields $\mathcal{G} \subseteq \mathcal{H}$.
Suppose $X$ is such
 either $X$ is almost surely non-negative or $\EE(|X| \ | \mathcal{G} )$ is almost surely finite,
then  $\EE( X | \mathcal{H} ) $ is defined and
$$
\EE(\EE( X | \mathcal{G} ) | \mathcal{H}) = \EE(\EE( X | \mathcal{H} ) | \mathcal{G}) 
=  \EE( X | \mathcal{G} ). 
$$
\end{proposition}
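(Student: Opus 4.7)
The plan is to first handle the non-negative case and then bootstrap to the signed case via the decomposition $X = X^+ - X^-$.

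Assume $X \ge 0$ a.s.\ Then $\EE(X|\mathcal{H})$ exists as the a.s.\ unique $\mathcal{H}$-measurable $[0,\infty]$-valued random variable $Z$ satisfying $\EE(Z\one_H) = \EE(X\one_H)$ for every $H \in \mathcal{H}$. The identity $\EE(\EE(X|\mathcal{G})|\mathcal{H}) = \EE(X|\mathcal{G})$ is immediate because $\EE(X|\mathcal{G})$ is $\mathcal{G}$-measurable and hence $\mathcal{H}$-measurable, so it is its own conditional expectation given $\mathcal{H}$. For the other identity I would verify the defining property directly: for any $G \in \mathcal{G} \subseteq \mathcal{H}$,
$$
\EE\bigl(\EE(X|\mathcal{H})\one_G\bigr) = \EE(X\one_G) = \EE\bigl(\EE(X|\mathcal{G})\one_G\bigr),
$$
and since $\EE(X|\mathcal{G})$ is $\mathcal{G}$-measurable, uniqueness forces $\EE(\EE(X|\mathcal{H})|\mathcal{G}) = \EE(X|\mathcal{G})$.

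For the signed case I first need to show that $\EE(|X|\,|\,\mathcal{G}) < \infty$ a.s.\ implies $\EE(|X|\,|\,\mathcal{H}) < \infty$ a.s., so that $\EE(X|\mathcal{H})$ is even defined. Applying the non-negative case to $|X|$ gives
$$
\EE\bigl(\EE(|X|\,|\,\mathcal{H})\,\big|\,\mathcal{G}\bigr) = \EE(|X|\,|\,\mathcal{G}) < \infty \mbox{ a.s.}
$$
Writing $A = \{\EE(|X|\,|\,\mathcal{H}) = +\infty\} \in \mathcal{H}$, on the $\mathcal{G}$-measurable event $\{\PP(A|\mathcal{G}) > 0\}$ the left-hand side would equal $+\infty$; a.s.\ finiteness therefore forces $\PP(A|\mathcal{G}) = 0$ a.s., whence $\PP(A) = \EE[\PP(A|\mathcal{G})] = 0$.

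With finiteness in hand, the domination $X^\pm \le |X|$ yields $\EE(X^\pm|\mathcal{H}) < \infty$ a.s., so $\EE(X|\mathcal{H}) = \EE(X^+|\mathcal{H}) - \EE(X^-|\mathcal{H})$ is well defined; applying the non-negative identities to $X^+$ and $X^-$ separately and invoking linearity of the now finite-valued conditional expectations yields both tower equalities. The main obstacle is the propagation-of-finiteness step: this is exactly where the generalised theory parts company with the classical integrable one, since we cannot invoke Fubini but must reason directly with the $[0,\infty]$-valued random variable $\EE(|X|\,|\,\mathcal{H})$ and the counterexample preceding the statement warns that naive manipulations with generalised conditional expectations are unsafe.
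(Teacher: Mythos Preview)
Your argument is correct and is the standard route: handle the non-negative case by verifying the defining property, then propagate finiteness from $\mathcal{G}$ up to $\mathcal{H}$ via the non-negative tower identity applied to $|X|$, and finally split $X = X^+ - X^-$ and use linearity. The propagation-of-finiteness step you flag as the crux is indeed the only place where any care is needed, and your treatment of it (bounding below by $\infty\cdot\one_A$ and reading off $\PP(A\mid\mathcal{G})=0$) is clean.

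As for comparison with the paper: there is nothing to compare. The paper does not supply a proof of this proposition; it simply records the statement and refers the reader to Chapter~4.2 of \c{C}inlar's book \cite{cinlar}. Your write-up is essentially the argument one finds there.
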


\begin{proposition}[Slot property] \label{th:slot}
Suppose $K$ is $\mathcal{G}$-measurable.  If either $\EE(|X| \ |\mathcal G) < \infty$
almost surely, or $X \ge 0$ and $K \ge 0$ almost surely,  then 
$$
\EE(K X | \mathcal{G})  = K \EE(  X | \mathcal{G}).
$$
\end{proposition}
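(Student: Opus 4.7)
The plan is to prove the slot property by the standard monotone-class / standard-machine approach, first dispatching the non-negative case and then reducing the integrable case to the non-negative case via the usual positive/negative-part decomposition.

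First I would treat the case $X \ge 0$, $K \ge 0$. I would show the identity for indicators, then for non-negative simple functions, then for non-negative measurable functions. For $K = \one_H$ with $H \in \mathcal{G}$, setting $Y = \EE(X\mid \mathcal G)$, the random variable $\one_H Y$ is $\mathcal G$-measurable and non-negative, and for any $G \in \mathcal G$ one has
\begin{align*}
\EE(\one_G \one_H X) &= \EE(\one_{G \cap H} X) \\
&= \EE(\one_{G \cap H} Y) \\
&= \EE(\one_G \one_H Y),
\end{align*}
so by the defining property of the generalised conditional expectation, $\EE(\one_H X \mid \mathcal G) = \one_H Y$. Linearity of the conditional expectation on non-negative random variables (applied to finite positive linear combinations, which is immediate from the definition) then extends this to simple non-negative $\mathcal G$-measurable $K$. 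For general $\mathcal G$-measurable $K \ge 0$, choose simple $0 \le K_n \uparrow K$; then $K_n X \uparrow K X$, and conditional monotone convergence (itself a direct consequence of the defining property together with ordinary MCT) gives $\EE(K_n X \mid \mathcal G) \uparrow \EE(K X\mid \mathcal G)$, while $K_n \EE(X\mid \mathcal G) \uparrow K \EE(X\mid \mathcal G)$, yielding the identity.

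Next I would reduce the case $\EE(|X| \mid \mathcal G) < \infty$ a.s. to the previous case. Writing $X = X^+ - X^-$ and $K = K^+ - K^-$, the non-negative case applied to each of the four products $K^{\pm} X^{\pm}$ gives, using that $|K|$ and $\EE(|X|\mid \mathcal G)$ are almost surely finite,
\[
\EE(|K X|\mid \mathcal G) = |K|\,\EE(|X|\mid \mathcal G) < \infty \text{ a.s.},
\]
so that $\EE(KX\mid \mathcal G)$ is defined. Then splitting
\begin{align*}
\EE(KX\mid \mathcal G) &= \EE(K^+X^+\mid \mathcal G) - \EE(K^+X^-\mid \mathcal G) \\
&\quad - \EE(K^-X^+\mid \mathcal G) + \EE(K^-X^-\mid \mathcal G),
\end{align*}
applying the non-negative case termwise, and collecting,
\[
\EE(KX\mid \mathcal G) = (K^+ - K^-)\bigl(\EE(X^+\mid \mathcal G) - \EE(X^-\mid \mathcal G)\bigr) = K\,\EE(X\mid \mathcal G),
\]
which is the desired identity.

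The only real subtlety, and the place where one must be careful, is that these manipulations occur in the generalised (possibly non-integrable) framework, where as noted in the paper linearity and the tower property can fail. The point is that once the hypothesis $\EE(|X|\mid \mathcal G) < \infty$ a.s. is in force, one has $\EE(X^\pm \mid \mathcal G) < \infty$ a.s., so each of the four terms above is an a.s.-finite random variable and the subtractions $\EE(X^+\mid \mathcal G) - \EE(X^-\mid \mathcal G)$ and $K^+ - K^-$ are unambiguous. Thus the delicate point is simply to justify, rather than assume, finiteness before invoking ordinary algebraic manipulations; once that is done, everything reduces to the standard machine in the non-negative case.
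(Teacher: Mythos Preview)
Your proof is correct and follows the standard monotone-class argument. However, note that the paper does not actually give its own proof of this proposition: it is stated in the appendix together with the tower property and the regular-conditional-distribution formula, and the text simply says ``Their proofs can be found in Chapter 4.2 of \c{C}inlar's book \cite{cinlar}.'' So there is no proof in the paper to compare against; your argument is precisely the kind of standard-machine proof one would expect to find in the cited reference.
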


\begin{proposition}  
Let $\mu$ be the regular conditional distribution of random variable
$X$ given a sigma-field $\mathcal{G}$.  Assuming the conditional
expectation is defined, we have
$$
\EE(X | \mathcal{G})(\omega) = \int x \ \mu( \omega, dx)
$$
for almost all $\omega \in \Omega$.
\end{proposition}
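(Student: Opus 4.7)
The plan is to prove this by the standard measure-theoretic induction (monotone class / standard machine), working first with $X \ge 0$ and then passing to the signed case via the definition of the generalised conditional expectation as $\EE(X^+ \mid \mathcal{G}) - \EE(X^- \mid \mathcal{G})$.

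First I would handle the case when $X = \one_B$ for a Borel set $B \subseteq \RR$. By the very definition of the regular conditional distribution $\mu$, the map $\omega \mapsto \mu(\omega, B)$ is a version of $\PP(X \in B \mid \mathcal{G})$, so the identity
\[
\EE(\one_B(X) \mid \mathcal{G})(\omega) = \mu(\omega, B) = \int \one_B(x)\, \mu(\omega, dx)
\]
holds for almost every $\omega$. By linearity the identity extends to non-negative simple functions of $X$. For an arbitrary non-negative measurable $X$, I would pick an increasing sequence of non-negative simple functions $X_n \uparrow X$; the monotone convergence theorem applies to $\mu(\omega, \cdot)$ for each fixed $\omega$, giving $\int x_n \, \mu(\omega, dx) \uparrow \int x \, \mu(\omega, dx)$ pointwise, while conditional monotone convergence (which holds in the generalised sense, as both sides are monotone limits of non-negative terms) gives $\EE(X_n \mid \mathcal{G}) \uparrow \EE(X \mid \mathcal{G})$ almost surely. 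Matching limits yields the claim for $X \ge 0$, with both sides possibly infinite on a common $\mathcal{G}$-measurable null set.

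For the general case, the hypothesis that $\EE(X \mid \mathcal{G})$ is defined means $\EE(|X| \mid \mathcal{G}) < \infty$ almost surely. Applying the non-negative case to $|X|$ gives $\int |x| \, \mu(\omega, dx) < \infty$ for almost every $\omega$, and so in particular $\int x^{\pm} \mu(\omega, dx) < \infty$ almost surely. Applying the non-negative case separately to $X^+$ and $X^-$ and subtracting, which is now legitimate because all four quantities are almost surely finite, gives
\[
\EE(X \mid \mathcal{G})(\omega) = \EE(X^+ \mid \mathcal{G})(\omega) - \EE(X^- \mid \mathcal{G})(\omega) = \int x^+ \mu(\omega, dx) - \int x^- \mu(\omega, dx) = \int x\, \mu(\omega, dx)
\]
for almost every $\omega$, as required.

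The only subtle point, and the one I would be most careful about, is the monotone-convergence step for the generalised conditional expectation: one must check that monotone convergence from below still holds for the $\EE(\cdot \mid \mathcal{G})$ operator when neither side is assumed integrable, and that the exceptional null sets in the base cases can be collected into a single $\PP$-null set independent of $B$ so that the identity of measures $\mu(\omega, \cdot)$ and the conditional law of $X$ given $\mathcal{G}$ can be used simultaneously across all Borel $B$. Both of these are standard once one invokes the $\pi$-$\lambda$ theorem to reduce the agreement of the two measures to a countable $\pi$-system generating the Borel $\sigma$-algebra.
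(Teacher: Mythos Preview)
The paper does not actually prove this proposition: it is one of three properties of the generalised conditional expectation that the paper states without proof, referring the reader to Chapter~4.2 of \c{C}inlar's book. So there is no in-paper argument to compare against. Your proposed proof via the standard machine (indicators $\to$ simple functions $\to$ monotone convergence $\to$ signed case by splitting into positive and negative parts) is the standard route and is essentially what one finds in \c{C}inlar; it is correct as outlined, including your remark about collecting the countably many null sets via a $\pi$-system. One small notational slip: when you write ``the case when $X = \one_B$'' you really mean you are establishing the identity $\EE(f(X)\mid\mathcal{G})(\omega)=\int f(x)\,\mu(\omega,dx)$ first for $f=\one_B$, not restricting $X$ itself; the subsequent lines make clear that is what you intend, but it would read better if stated that way from the start.
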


Now recall the definition of a local martingale:
\begin{definition} A local martingale  is an adapted process $M = (M_t)_{t \ge 0}$,
 such that there exists an increasing
 sequence of stopping times $(\tau_N)$ with $\tau_N \uparrow \infty$ such that
the stopped process $M^{\tau_N} =(M_{t \wedge \tau})_{t \ge 0}$ is a martingale for each $N$.
\end{definition}

The following  three properties of discrete-time local martingales  will be useful.
Their proofs  can be found, for instance, in the paper by Jacod \& Shiryaev \cite{JS}.
\begin{proposition}\label{th:locmart} A discrete-time process $M$ is a local martingale if and only
if  for all $t \ge 0$ we have $\EE( |M_{t+1}| \ | \ff_t ) < \infty$ almost surely and 
$\EE( M_{t+1} | \ff_t ) = M_t$ in the sense of generalised conditional expectation. 
\end{proposition}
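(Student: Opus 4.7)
The plan is to prove both implications by exhibiting or exploiting an explicit localizing sequence, tying things together via the slot property (Proposition \ref{th:slot}).

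For the $(\Leftarrow)$ direction, assume the generalised conditional expectation hypothesis. I would construct the stopping times
$$
\tau_N = \inf\bigl\{ t \ge 0 : |M_t| > N \text{ or } \EE(|M_{t+1}|\, |\, \ff_t) > N \bigr\}.
$$
Because $\EE(|M_{t+1}|\,|\,\ff_t)$ is $\ff_t$-measurable, the event $\{\tau_N > t\}$ is $\ff_t$-measurable, so $\tau_N$ is a stopping time. The hypothesis that $\EE(|M_{t+1}|\,|\,\ff_t) < \infty$ almost surely for every $t$ shows that $\tau_N \uparrow \infty$ almost surely. For the martingale property of $M^{\tau_N}$, I would use the decomposition $M^{\tau_N}_{t+1} = M_{t+1}\one_{\{\tau_N \ge t+1\}} + M_{\tau_N}\one_{\{\tau_N \le t\}}$ and apply the slot property: because $\{\tau_N \ge t+1\} \in \ff_t$, the first piece has conditional expectation $\one_{\{\tau_N \ge t+1\}} \EE(M_{t+1}\,|\,\ff_t) = \one_{\{\tau_N \ge t+1\}} M_t$, while the second piece is already $\ff_t$-measurable and bounded by $N$. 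Integrability follows from $\EE|M^{\tau_N}_{t+1}| \le N + \EE[\one_{\{\tau_N \ge t+1\}}\EE(|M_{t+1}|\,|\,\ff_t)] \le 2N$, and summing the two contributions gives $\EE(M^{\tau_N}_{t+1}\,|\,\ff_t) = M^{\tau_N}_t$.

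For the $(\Rightarrow)$ direction, assume $M$ is a local martingale with localizing sequence $(\tau_N)$. Fix $t$. On the $\ff_t$-measurable event $\{\tau_N > t\}$ (equivalently $\{\tau_N \ge t+1\}$, since stopping times are integer-valued) one has $M^{\tau_N}_{t+1} = M_{t+1}$ and $M^{\tau_N}_t = M_t$. Since $M^{\tau_N}$ is a true martingale, $\EE(|M^{\tau_N}_{t+1}|\,|\,\ff_t) < \infty$ almost surely, and the slot property gives
$$
\one_{\{\tau_N > t\}} \EE(|M_{t+1}|\,|\,\ff_t) = \EE\bigl(\one_{\{\tau_N > t\}}|M_{t+1}|\,\big|\,\ff_t\bigr) < \infty
$$
almost surely. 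Because $\{\tau_N > t\} \uparrow \Omega$ a.s., we conclude $\EE(|M_{t+1}|\,|\,\ff_t) < \infty$ a.s., so the generalised conditional expectation is well-defined. Writing $M_{t+1} = M_{t+1}^+ - M_{t+1}^-$ and applying the same argument to each piece yields $\one_{\{\tau_N > t\}}\EE(M_{t+1}\,|\,\ff_t) = \one_{\{\tau_N > t\}} M_t$, and letting $N \to \infty$ gives $\EE(M_{t+1}\,|\,\ff_t) = M_t$.

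The main obstacle is purely bookkeeping: one must restrict to the $\ff_t$-measurable set $\{\tau_N > t\}$ before invoking the slot property so as to avoid the pathologies of generalised conditional expectation flagged in the appendix, and one must verify both that $\tau_N$ as defined in the $(\Leftarrow)$ direction really is a stopping time (which uses $\ff_t$-measurability of $\EE(|M_{t+1}|\,|\,\ff_t)$) and that it tends to infinity almost surely (which uses the a.s. finiteness hypothesis together with the fact that time is discrete, so only countably many $t$ need to be controlled).
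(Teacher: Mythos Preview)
The paper does not actually prove this proposition; it simply cites Jacod \& Shiryaev \cite{JS}. Your argument is the standard one found there, and both directions are set up correctly.

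There is, however, a genuine slip in the $(\Leftarrow)$ integrability step. You assert that the piece $M_{\tau_N}\one_{\{\tau_N \le t\}}$ is ``bounded by $N$''. It is not: with your definition of $\tau_N$, the process may stop precisely \emph{because} $|M_{\tau_N}| > N$, so $|M_{\tau_N}|$ is in general unbounded on $\{\tau_N \le t\}$. Consequently the inequality $\EE|M^{\tau_N}_{t+1}| \le 2N$ is false. The fix is to work with the increments rather than the levels: write
\[
M^{\tau_N}_{t+1} - M^{\tau_N}_t = (M_{t+1} - M_t)\,\one_{\{\tau_N > t\}},
\]
and observe that on $\{\tau_N > t\}$ one has $|M_t| \le N$ and $\EE(|M_{t+1}|\,|\,\ff_t) \le N$, so $\EE\bigl|M^{\tau_N}_{t+1} - M^{\tau_N}_t\bigr| \le 2N$. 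Integrability of $M^{\tau_N}_t$ then follows by induction on $t$ (starting from $M^{\tau_N}_0 = M_0$, whose integrability is implicit in the local-martingale definition used here), and the conditional-expectation identity $\EE(M^{\tau_N}_{t+1} - M^{\tau_N}_t \,|\, \ff_t) = \one_{\{\tau_N > t\}}\bigl(\EE(M_{t+1}\,|\,\ff_t) - M_t\bigr) = 0$ then gives the martingale property exactly as you intended. The $(\Rightarrow)$ direction is fine as written.
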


\begin{proposition}\label{th:mart-trans} Suppose $Q$ is a discrete-time local martingale and $K$ is a predictable process.
Let  
$$
M_t =   \sum_{s=1}^t K_s (Q_s-Q_{s-1})
$$
for $t \ge 1$.  Then $M$ is a local martingale.   
\end{proposition}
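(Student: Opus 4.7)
\textbf{Proof proposal for Proposition \ref{th:mart-trans}.} My plan is to bypass any explicit construction of a localising sequence and instead appeal directly to the characterisation of discrete-time local martingales given in Proposition \ref{th:locmart}. So it suffices to verify, for each fixed $t\ge 0$, the two conditions $\EE(|M_{t+1}| \mid \ff_t) < \infty$ almost surely and $\EE(M_{t+1} \mid \ff_t) = M_t$ in the sense of generalised conditional expectation. The key observation that makes this clean is that $K$ is predictable, so $K_{t+1}$ is $\ff_t$-measurable, and the slot property (Proposition \ref{th:slot}) will let me pull it outside conditional expectations.

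First I would unpack the one-step increment $M_{t+1}-M_t = K_{t+1}(Q_{t+1}-Q_{t})$. Since $Q$ is assumed to be a local martingale, Proposition \ref{th:locmart} applied to $Q$ gives $\EE(|Q_{t+1}-Q_t| \mid \ff_t) < \infty$ almost surely. This is exactly the hypothesis needed to invoke the slot property with $X = Q_{t+1}-Q_t$ and $K = K_{t+1}$, yielding
\[
\EE\bigl( |K_{t+1}(Q_{t+1}-Q_t)| \,\big|\, \ff_t \bigr) \;=\; |K_{t+1}|\,\EE\bigl( |Q_{t+1}-Q_t| \,\big|\, \ff_t \bigr) \;<\; \infty
\]
almost surely. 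Combined with the fact that $M_t$ is $\ff_t$-measurable and finite almost surely, this establishes the integrability condition $\EE(|M_{t+1}| \mid \ff_t) < \infty$.

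Having verified integrability, the same slot-property computation applied without absolute values gives
\[
\EE\bigl( K_{t+1}(Q_{t+1}-Q_t) \,\big|\, \ff_t \bigr) \;=\; K_{t+1}\,\EE\bigl( Q_{t+1}-Q_t \,\big|\, \ff_t \bigr) \;=\; 0,
\]
where the final equality is again Proposition \ref{th:locmart} applied to $Q$. Adding $M_t$ to both sides produces $\EE(M_{t+1} \mid \ff_t) = M_t$, and one more application of Proposition \ref{th:locmart} in the other direction concludes that $M$ is a local martingale.

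The main obstacle I anticipate is purely a bookkeeping one: making sure every invocation of the slot property has the hypotheses stated in Proposition \ref{th:slot}, i.e.\ that either the inner random variable is non-negative or its conditional expected absolute value is finite. Both are satisfied here thanks to the local martingale property of $Q$, but this is the one place where the argument would break down if $Q$ were merely adapted rather than a local martingale, so it is worth stating explicitly rather than silently assuming the usual linearity of conditional expectation.
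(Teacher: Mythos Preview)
Your argument is correct. The paper does not actually supply its own proof of this proposition; it simply states that the proof can be found in Jacod \& Shiryaev \cite{JS}. Your route via the characterisation in Proposition \ref{th:locmart} is precisely the standard one and is entirely in the spirit of that reference, so there is nothing substantive to compare.
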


\begin{proposition}\label{th:non-neg}  Suppose $M$ is a discrete-time local martingale such that
there is a non-random time horizon $T > 0$ with the property that   $M_T \ge 0$ almost surely. 
Then $(M_t)_{0 \le t \le T}$ is a martingale.
\end{proposition}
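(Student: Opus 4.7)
The plan is to prove the statement by combining a backward induction (to establish non-negativity of every $M_t$) with a forward induction (to propagate integrability from $M_0$ upward). The main tool is Proposition \ref{th:locmart}, which characterises a discrete-time local martingale $M$ by the two properties $\EE(|M_{t+1}|\mid\ff_t)<\infty$ almost surely and $\EE(M_{t+1}\mid\ff_t)=M_t$ in the sense of generalised conditional expectation. Once I know that each $M_t$ with $0\le t\le T$ is both non-negative and integrable, the generalised identity above reduces to the ordinary one and the martingale property follows immediately.

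For the backward induction, the base case $M_T\ge 0$ is given by hypothesis. For the inductive step, suppose $M_{t+1}\ge 0$ almost surely. The local martingale condition guarantees that $\EE(M_{t+1}\mid\ff_t)<\infty$ almost surely, so the generalised conditional expectation of the non-negative variable $M_{t+1}$ coincides with the ordinary conditional expectation of a non-negative random variable, and is therefore non-negative. Consequently $M_t=\EE(M_{t+1}\mid\ff_t)\ge 0$ almost surely, and the induction descends all the way to $t=0$.

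For the forward induction, first observe that $M_0$ is integrable: any localising sequence $(\tau_N)$ makes $M^{\tau_N}$ a true martingale by definition, and in particular its initial value $M^{\tau_N}_0=M_0$ must lie in $L^1$. Now suppose $M_t\in L^1$. Since $M_{t+1}\ge 0$, the non-negativity branch of the tower property (Proposition \ref{th:tower}, applied with the trivial outer sigma-field) gives
$$
\EE(M_{t+1})=\EE\bigl(\EE(M_{t+1}\mid\ff_t)\bigr)=\EE(M_t)<\infty,
$$
so $M_{t+1}\in L^1$ as well. Iterating up to $t=T$ delivers integrability of every $M_t$, and with this in hand the generalised identity of Proposition \ref{th:locmart} becomes the ordinary martingale identity. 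I do not anticipate any serious obstacle; the argument is essentially bookkeeping around the distinction between generalised and ordinary conditional expectations, and the order of the two inductions is deliberate, since proving non-negativity first is what licenses the non-negativity branch of the tower property in the integrability step without any a priori integrability hypothesis on $M_{t+1}$.
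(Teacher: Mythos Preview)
Your argument is correct: the backward induction via Proposition~\ref{th:locmart} gives $M_t\ge 0$ for all $0\le t\le T$, the localising sequence forces $M_0\in L^1$, and the non-negative tower property then pushes integrability forward so that the generalised conditional expectation becomes the ordinary one. The paper does not supply its own proof of this proposition, referring instead to Jacod \& Shiryaev~\cite{JS}; your proof is essentially the standard one found there.
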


We conclude with a useful technical result due to Kabanov \cite{Kabanov}.  
Unlike the other results in this note, the proof of Kabanov's theorem requires subtle ideas
from functional analysis.  Fortunately, it is only used in one place - to show that
the existence of a local martingale deflator implies the existence of a true martingale deflator.
It should be stressed that Kabanov's theorem is only needed in the proof 
because we do not restrict our attention 
to a finite time horizon and because we do not assume that our assets have non-negative prices.

\begin{theorem}[Kabanov \cite{Kabanov}]\label{th:kabanov} Suppose $M$ is a discrete-time local martingale with respect to a
probability measure $\PP$.  Then there exists an equivalent measure $\QQ$, such that $M$
is a true martingale with respect to $\QQ$.
\end{theorem}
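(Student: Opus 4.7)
The plan is to construct $\QQ$ via a density $d\QQ/d\PP = L_\infty$, where $L=(L_t)_{t\ge 0}$ is a strictly positive uniformly integrable $\PP$-martingale with $L_0=1$, built so that the product $LM = (L_tM_t)_{t\ge 0}$ is a true $\PP$-martingale. Bayes's rule then yields $\EE^{\QQ}(M_{t+1}\mid\ff_t) = \EE^{\PP}(L_\infty M_{t+1}\mid\ff_t)/L_t = L_t M_t /L_t = M_t$, while $\EE^{\QQ}|M_t| = \EE^{\PP}(L_t|M_t|) < \infty$, so $M$ is a true $\QQ$-martingale.

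I would build $L$ inductively as a product $L_t=\prod_{s=1}^t\rho_s$ of positive $\ff_s$-measurable conditional densities with $\EE(\rho_s\mid\ff_{s-1})=1$. The one-step problem at time $t$ is to find $\rho_{t+1}$ so that (i) $\EE(\rho_{t+1} M_{t+1}\mid\ff_t)=M_t$, which preserves the conditional martingale property of $M$ and hence keeps $LM$ a $\PP$-martingale, and (ii) $L_{t+1}|M_{t+1}|$ is $\PP$-integrable. By Proposition~\ref{th:locmart}, the regular conditional distribution $\mu_\omega$ of $M_{t+1}$ given $\ff_t$ satisfies $\int|x|\,d\mu_\omega(x)<\infty$ and $\int x\,d\mu_\omega(x)=M_t(\omega)$ $\PP$-almost surely. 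A natural candidate is the Esscher-tilted density $\rho_{t+1}=f_\omega(M_{t+1})$ with
\[
f_\omega(x) = \frac{\exp\bigl(\lambda_t(\omega)\cdot x - \varepsilon_t |x|^2\bigr)}{\int \exp\bigl(\lambda_t(\omega)\cdot y - \varepsilon_t |y|^2\bigr)\,d\mu_\omega(y)},
\]
where the $\RR^n$-valued Lagrange multiplier $\lambda_t(\omega)$ is chosen so that (i) holds. Its existence follows from the surjectivity of the gradient of the strictly convex log-partition function onto the interior of the convex hull of the support of the Gaussian-damped conditional law, which contains $M_t(\omega)$ unless $\mu_\omega$ is a point mass at $M_t(\omega)$ (in which case one simply takes $\rho_{t+1}=1$); measurability of $\lambda_t$ follows from the implicit function theorem. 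The damping $e^{-\varepsilon_t|x|^2}$ gives the required unconditional integrability in (ii).

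For uniform integrability of $L$, I would choose the parameter $\varepsilon_t$ (possibly $\ff_t$-randomly) so that $\EE\bigl[(\sqrt{\rho_{t+1}}-1)^2 \mid \ff_t\bigr] \le 2^{-t}$, whence Kakutani's product-martingale criterion gives $L_t \to L_\infty$ both $\PP$-almost surely and in $L^1$, with $L_\infty>0$ $\PP$-almost surely and $\EE L_\infty = 1$. This yields the required equivalent measure $\QQ$ and completes the proof modulo a measurable-selection verification at each step.

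The main obstacle is reconciling the mean-preserving condition (i) with the closeness-to-one estimate for $\rho_{t+1}$ when $\mu_\omega$ has heavy tails under $\PP$: the Esscher parameter $\lambda_t(\omega)$ can be forced so large that $\rho_{t+1}$ is far from $1$ no matter how small $\varepsilon_t$ is chosen. Kabanov's original proof circumvents this with a subtler functional-analytic compactness argument on the cone of positive densities in $L^1(\PP)$, using a Koml\'os-type subsequence theorem or a Banach--Alaoglu selection together with Fatou-type closedness of the set of equivalent martingale densities. An alternative route I would try is a diagonal construction: solve the problem on each finite horizon $[0,T]$, where the finite-step inductive tilt above goes through unproblematically because only finitely many mean constraints are imposed, then extract a weak limit of the finite-horizon densities and verify the infinite-horizon martingale property via Fatou.
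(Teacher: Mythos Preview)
The paper does not supply its own proof of this theorem. It is quoted in the Appendix as an external result due to Kabanov, and the paper explicitly remarks that ``unlike the other results in this note, the proof of Kabanov's theorem requires subtle ideas from functional analysis.'' So there is nothing in the paper to compare your argument against; I can only assess your sketch on its own terms.

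Your overall architecture---build a strictly positive uniformly integrable density martingale $L$ as a product of one-step tilts $\rho_t$, arranged so that $LM$ is a true $\PP$-martingale and $L_t|M_t|$ is integrable---is the right target, and the Bayes-rule reduction is correct. The difficulty is exactly the one you flag, and it is a genuine gap rather than a technicality. For fixed $\omega$, three constraints compete: the Gaussian damping $e^{-\varepsilon_t|x|^2}$ must be strong enough that $L_t\,\EE[\rho_{t+1}|M_{t+1}|\mid\ff_t]$ is $\PP$-integrable \emph{unconditionally}; the mean must be restored to $M_t(\omega)$ by the Esscher parameter $\lambda_t(\omega)$; and the Kakutani bound $\EE[(\sqrt{\rho_{t+1}}-1)^2\mid\ff_t]\le 2^{-t}$ must hold. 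The first constraint is a requirement on the \emph{outer} expectation over $\omega$ and is not achieved merely by making $\varepsilon_t(\omega)$ positive: when $\EE|M_{t+1}|=\infty$, the conditional first moments $\int|x|\,d\mu_\omega(x)$ are finite for almost every $\omega$ but not integrable in $\omega$, so you need genuine damping on a set of positive measure, which in turn forces $\lambda_t$ away from zero there, and the Hellinger bound can fail. Your sketch does not show these three requirements can be met simultaneously, and in general they cannot be by a simple Esscher--Gaussian family.

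The diagonal/weak-limit alternative you mention is also not yet a proof. Extracting a weak-$*$ limit in $L^\infty$ (or a Koml\'os-type forward-convex limit) of finite-horizon densities produces a nonnegative limit $L_\infty$, but you must rule out loss of mass ($\EE L_\infty<1$) and degeneracy ($\PP(L_\infty=0)>0$), and then check that the martingale property of $LM$ survives the limit; each of these steps needs a closedness argument of the sort used in the Dalang--Morton--Willinger circle of ideas. That is precisely the ``subtle functional-analytic'' content the paper alludes to in Kabanov's original proof, and it is not supplied here.
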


\subsection{Measurability and selection}
The following result allows us to assert the measurability of the minimiser of a random function.
The proof is from  the paper of Rogers \cite{rogers}.
\begin{proposition}\label{th:meas}  Let  $f: \RR^n \times \Omega \to \RR$ is such that $f(x, \cdot)$ is measurable for all $x$, and
that $f(\cdot, \omega)$ continuous and has a unique minimiser $X^*(\omega)$ for each $\omega$.  Then $X^*$ is
measurable.
\end{proposition}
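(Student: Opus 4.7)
My plan is to show that preimages of closed balls under $X^*$ are measurable, and then invoke the fact that closed balls generate the Borel $\sigma$-algebra on $\RR^n$ (every open set is a countable union of closed balls with rational centres and rational radii).

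The key identity I would establish is: for any closed ball $B \subset \RR^n$,
\begin{equation*}
\{\omega : X^*(\omega) \in B\} = \left\{\omega : \inf_{x \in B} f(x, \omega) = \inf_{x \in \RR^n} f(x, \omega) \right\}.
\end{equation*}
The forward inclusion is immediate since $X^*(\omega) \in B$ forces $\inf_B f(\cdot, \omega) \le f(X^*(\omega), \omega) = \inf_{\RR^n} f(\cdot, \omega)$, while the reverse inequality is trivial. For the backward inclusion, compactness of $B$ together with continuity of $f(\cdot, \omega)$ ensures that the infimum $\inf_B f(\cdot, \omega)$ is attained at some $x_0 \in B$; the hypothesis then gives $f(x_0, \omega) = \inf_{\RR^n} f(\cdot, \omega)$, and uniqueness of the minimiser forces $x_0 = X^*(\omega)$, hence $X^*(\omega) \in B$.

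Having this identity, measurability of the set on the right-hand side is straightforward. By continuity of $f(\cdot, \omega)$ and density of $\QQ^n$, we have
\begin{equation*}
\inf_{x \in B} f(x, \omega) = \inf_{x \in B \cap \QQ^n} f(x, \omega), \qquad \inf_{x \in \RR^n} f(x, \omega) = \inf_{x \in \QQ^n} f(x, \omega),
\end{equation*}
and each of these is a countable infimum of the measurable functions $\omega \mapsto f(x, \omega)$, hence measurable. Thus $\{X^* \in B\}$ is measurable for every closed ball $B$, and a standard generating-class argument finishes the proof.

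I do not anticipate a serious obstacle. The only point requiring mild care is that the infimum over $B$ must be \emph{attained} in order to exploit uniqueness of the minimiser, which is why restricting to closed balls (compact sets) rather than arbitrary closed sets is important. Without the uniqueness hypothesis one would need a measurable selection theorem such as Kuratowski--Ryll-Nardzewski, but uniqueness reduces the statement to this elementary argument in the spirit consistent with the paper's preference for hands-on proofs.
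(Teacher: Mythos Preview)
Your proof is correct. The paper's argument follows the same elementary spirit but is organised slightly differently: it works with \emph{open} balls $B$ and writes down the single identity
\[
\{ \omega : X^*(\omega) \in B \} \;=\; \bigcup_{p \in B \cap Q}\; \bigcap_{q \in B^c \cap Q} \{ \omega : f(p,\omega) < f(q,\omega) \},
\]
for $Q$ a countable dense subset of $\RR^n$, which exhibits the preimage directly as a countable combination of measurable events. Both routes rely on the same ingredients (continuity of $f(\cdot,\omega)$, density of the rationals, and uniqueness of the minimiser); your detour through compactness and attainment of the infimum over $B$ makes the role of uniqueness a bit more explicit, whereas the paper's version compresses everything into one line.
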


\begin{proof}
For any open ball $B \subset \RR^n$ we have
$$
\{\omega:  X^*(\omega) \in B \} = \bigcup_{p \in B \cap Q} \bigcap_{q \in B^c \cap Q}\{\omega: f(p,\omega) < f(q, \omega) \}
$$
where $Q$ is a countable dense subset of $\RR^n$.
\end{proof}

Finally, we include a useful measurable version of the Bolzano--Weierstrass theorem.  An elementary
proof is found in the paper of Kabanov \& Stricker \cite{KS}.
 
\begin{proposition}\label{th:select}  Let $(\xi_k)_{k \ge 0}$ be a sequence of measurable functions $\xi_k: \Omega \to \RR^n$
such that $\sup_k \| \xi_k(\omega) \| < \infty$ for all $\omega \in \Omega$.  Then there exists
an increasing sequence of integer-valued measurable functions $N_k$ and an $\RR^n$-valued
 measurable function $\xi^*$ such that
$$
\xi_{N_k(\omega)}(\omega) \to \xi^*(\omega) \mbox{ as } k \to \infty
$$
for all $\omega \in \Omega$.
\end{proposition}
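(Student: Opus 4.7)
The plan is to proceed in two stages: first handle the scalar case $n=1$ via a hands-on construction of the subsequence using the $\limsup$, then bootstrap to $\RR^n$ by iterating the scalar result coordinate by coordinate.

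For the scalar case, I would set $\xi^*(\omega) := \limsup_k \xi_k(\omega)$, which is automatically measurable as a countable $\inf\sup$ of measurable functions, and is everywhere finite by the hypothesis $\sup_k|\xi_k(\omega)|<\infty$. The key construction is an inductive definition of the subsequence: set $N_0 \equiv 0$ and
\[
 N_k(\omega) = \min\{ m > N_{k-1}(\omega) : |\xi_m(\omega) - \xi^*(\omega)| < 1/k \}.
\]
Because $\xi^*(\omega)$ is a limit point of the bounded sequence $(\xi_k(\omega))$, this minimum is attained, so $N_k$ is well-defined. By construction $\xi_{N_k(\omega)}(\omega) \to \xi^*(\omega)$. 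Measurability of $N_k$ is the main technical point to check: assuming inductively that $N_{k-1}$ is measurable, the event $\{N_k = j\}$ equals
\[
\{N_{k-1} < j\} \cap \{|\xi_j - \xi^*| < 1/k\} \cap \bigcap_{N_{k-1}<i<j}\{|\xi_i - \xi^*| \ge 1/k\},
\]
which is a finite (for each fixed $j$) intersection of measurable sets, so $\{N_k = j\}\in\ff$ for each $j$, giving measurability of $N_k$.

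To pass from $n=1$ to general $n$, I would iterate. Apply the scalar result to the sequence of first coordinates $(\xi_k^1)$ to obtain a measurable increasing $N^{(1)}_k$ and a measurable limit $\xi^{*,1}$. Then define $\eta^{(1)}_k(\omega) := \xi_{N^{(1)}_k(\omega)}(\omega)$; since $\eta^{(1)}_k = \sum_{j=1}^\infty \xi_j \one_{\{N^{(1)}_k = j\}}$ is a countable sum of measurable functions it is measurable, and it remains pointwise bounded. Apply the scalar result to the second coordinate of $\eta^{(1)}_k$ to refine to a further subsequence along which the second coordinate converges, and iterate for each of the $n$ coordinates. After $n$ rounds we obtain a measurable increasing $N_k$ (the composition of the $n$ successive subsequences) and a measurable $\RR^n$-valued $\xi^*$ such that $\xi_{N_k(\omega)}(\omega) \to \xi^*(\omega)$ for every $\omega$.

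The main obstacle is the measurability of the inductive minimum in the scalar step; once that is secured, everything else (composition, iteration across coordinates) is routine bookkeeping. A secondary subtlety worth double-checking is that each coordinate-refinement preserves the convergence established at the previous coordinate, but this is immediate since further subsequences of a convergent sequence converge to the same limit.
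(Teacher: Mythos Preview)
The paper does not actually prove this proposition; it simply cites Kabanov \& Stricker \cite{KS} for an elementary proof. Your argument is correct and is essentially the standard one found there: take $\xi^*=\limsup_k \xi_k$ in the scalar case, extract a measurable subsequence by taking the first index getting within $1/k$ of $\xi^*$, and then iterate coordinate by coordinate.

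One small point of notation worth tightening: in your displayed decomposition of $\{N_k=j\}$, the range of the intersection $\bigcap_{N_{k-1}<i<j}$ depends on $\omega$ through $N_{k-1}(\omega)$, so as written it is not literally a fixed finite intersection of measurable sets. The clean way to say it is to partition over the possible values of $N_{k-1}$:
\[
\{N_k=j\}=\bigcup_{\ell=0}^{j-1}\Bigl(\{N_{k-1}=\ell\}\cap\{|\xi_j-\xi^*|<1/k\}\cap\bigcap_{i=\ell+1}^{j-1}\{|\xi_i-\xi^*|\ge 1/k\}\Bigr),
\]
which is manifestly a finite union of finite intersections of measurable sets. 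This is clearly what you intended, and with that adjustment the proof is complete.
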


\section{Acknowledgement}
This work was supported
in part by the Cambridge Endowment for Research in Finance. 
I would like to thank Erhan Bayraktar for his comments and suggestions.


\begin{thebibliography}{99}

\bibitem{BN} B. Bouchard and M. Nutz.
Arbitrage and duality in nondominated discrete-time models.
Annals of Applied Probability \textbf{25}(2): 823--859.
  (2015) 

\bibitem{BFM} 
M. Burzoni, M. Frittelli and M. Maggis. Robust arbitrage under uncertainty in discrete time.
  arXiv:1407.0948. (2014) 	 

\bibitem{CFR} P. Carr, T. Fisher and J. Ruf. On the hedging of options on exploding exchange rates. 
\textit{Finance and Stochastics} \textbf{18}(1).  (2014) 

\bibitem{cinlar} E. \c{C}inlar. \textit{Probability and Stochastics}. Graduate
Texts in Mathematics \textbf{261}. Springer. (2011)

\bibitem{CH} A.M.G. Cox and D.G. Hobson. Local martingales, bubbles and option prices.
\textit{Finance and Stochastics} \textbf{9}: 477--492. (2005) 
 

\bibitem{DMW} R.C. Dalang, A. Morton and W. Willinger.  Equivalent martingale measures and no-arbitrage
in stochastic securities markets.  \textit{Stochastics and Stochastics Reports} \textbf{29}: 185--201. (1990)

\bibitem{DelSch} F. Delbaen and W. Schachermayer. 
The fundamental theorem of asset pricing for unbounded stochastic processes.
\textit{Mathematische Annalen} \textbf{312}: 215--250. (1998) 



%\bibitem{FS}  H. F\"ollmer and A. Schied.



\bibitem{HK} J.M. Harrison and D.M. Kreps. Martingales and arbitrage in multiperiod 
securities markets. \textit{Journal of Economic Theory} \textbf{20}: 381--408. (1979) 

%\bibitem{HP} J.M. Harrison and S.P. Pliska. Martingales and stochastic integrals
%in the theory of continuous trading.  \textit{Stochastic Processes and their Applications} \textbf{11}: 215--260. (1981) 

\bibitem{Herdegen} M. Herdegen. No-arbitrage in a num\'eraire independent modelling framework.
\textit{Mathematical Finance} DOI: 10.1111/mafi.12088 (2014)

\bibitem{JS} J. Jacod and A.N. Shiryaev.  Local martingales and the fundamental asset pricing
theorems in the discrete-time case.  \textit{Finance and Stochastics} \textbf{2}: 259-273. (1998)

\bibitem{Kabanov}  Yu. Kabanov. In discrete time a local martingale is a martingale
under an equivalent probability measure. \textit{Finance and Stochastics} \textbf{12}: 293–297. (2008)  

\bibitem{KS} Yu. Kabanov and Ch. Stricker. 
 A teacher's note on no-arbitrage criteria. \textit{S\'eminaire de probabilit\'es de Strasbourg}
\textbf{35}: 149--152. (2001)

\bibitem{K} C. Kardaras.  Market viability via absence of arbitrage of the first kind. \textit{Finance and
Stochastics}  \textbf{16}(4): 651--667. (2012)  

\bibitem{protter} Ph. Protter.  A mathematical theory of financial bubbles. 
\textit{Paris-Princeton Lecture Notes in Mathematical Finance} \textbf{2081}: 1--108. (2013)


\bibitem{rogers} L.C.G. Rogers.  Equivalent martingale measures and no-arbitrage.
 \textit{Stochastics and Stochastics Reports} \textbf{51}: 41--49.  (1994)

\bibitem{RR} J. Ruf and W. Rungaldier.  A systematic approach to constructing market models
with arbitrage. In \textit{Arbitrage, Credit and Informational Risks}, C. Hillairet, M. Jeanblanc
and Y. Jiao, editors. World Scientific Publishing. (2014).

%\bibitem{rogers98} L.C.G. Rogers.  The origins of risk-neutral pricing and the Black--Scholes formula.
%In \textit{Handbook of Risk Management and Analysis}, ed. C. O. Alexander, Wiley, Chichester. (1998)


 \bibitem{Sch} W. Schachermayer.  Martingale measures for discrete time processes
with infinite horizon. \textit{Mathematical Finance} \textbf{4}: 25-–56. (1994) 

\bibitem{schweizer} M. Schweizer. Some ideas on bubbles.
Presentation at the Swiss-Kyoto Symposium.  \texttt{http://www.ccfz.ch/files/schweizer\_{}1.pdf} (2013)


\bibitem{SchweizerTakaoka} M. Schweizer and K. Takaoka.  A note on the 
condition of no unbounded profit with bounded risk.
\textit{Finance and Stochastics} \textbf{18}(2): 393--405. (2014)




\bibitem{T2010} M.R. Tehranchi.  Characterizing attainable claims: a new proof. \textit{Journal of Applied Probability}
\textbf{47} (4): 1013-1022. (2010) 


\end{thebibliography}
\end{document}